\documentclass{article}
%\hyphenation{op-tical net-works semi-conduc-tor}
\usepackage{bbm}
\usepackage{amsmath,amsthm}
\usepackage{amsfonts}
\usepackage{mathrsfs}
\usepackage{amssymb}
%\usepackage[bookmarks=false]{hyperref}
%\usepackage[a4paper]{geometry}

% THEOREM Environments ---------------------------------------------------
 \newtheorem{thm}{Theorem}[section]
\newtheorem{rem}[thm]{Remark}
\newtheorem{definition}[thm]{Definition}
\newtheorem{lem}[thm]{Lemma}
 \newtheorem{prop}[thm]{Proposition}
\newtheorem{cor}[thm]{Corollary}

 \newtheorem{conj}[thm]{Conjecture}
 
\newcommand{\aut}{\mathrm{Aut}}
\newcommand{\paut}{\mathrm{PAut}}
\def\f#1{{\mathbb{F}}_{#1}}

%%% ----------------------------------------------------------------------

\begin{document}

\title{On Deep Holes of Elliptic Curve Codes\thanks{
 The research of Jun Zhang was supported  by the National Natural Science Foundation of China under the Grant 11971321 and the National Key Research and Development Program of China under Grants 2018YFA0704703. Daqing Wan was partially supported 
 by NSF grant CCF-1900929}}
%\author{Jun Zhang}
%\address{Jun Zhang, School of Mathematical Sciences, Capital Normal University, Beijing 100048, P.R. China.}
%\email{junz@cnu.edu.cn}
%\author{Daqing Wan}
%\address{Daqing Wan is with Department of Mathematics,
% University of California, Irvine, CA 92697, USA.}
%\email{dwan@math.uci.edu}

\author{Jun Zhang\thanks{Jun Zhang is with the School of Mathematical Sciences, Capital Normal University, Beijing 100048, China. Email: junz@cnu.edu.cn},
\and
Daqing Wan
\thanks{Daqing Wan is with the Department of Mathematics, University of California, Irvine, CA 92697, USA. Email: dwan@math.uci.edu}}

%\keywords{Reed-Solomon codes, covering radius, deep holes}
\date{}
\maketitle

\begin{abstract}

We give a method to construct deep holes for elliptic curve codes. For long elliptic curve codes, we conjecture that our 
construction is complete in the sense that it gives all deep holes. Some evidence and heuristics on the completeness 
are provided via the connection with problems and results in finite geometry.

\begin{flushleft}
	\textbf{Keywords:} Algebraic geometry code, elliptic curve, covering radius, deep hole, finite geometry.
\end{flushleft}
\end{abstract}

% no keywords

\section{Introduction}
%\subsection{Notations and the Main Results}
The classification of deep holes in a linear code is a fundamental difficult problem in coding theory. Deciding if a given received word is a deep hole 
is already NP-hard, even for short Reed-Solomon codes. For long Reed-Solomon codes, 
this has been studied extensively, and is better understood if one assumes the MDS conjecture or the rational normal curve conjecture in finite geometry. Algebraically, Reed-Solomon codes are just algebraic geometry codes of genus zero. From this point 
of view, it is natural to study the deep hole problem for algebraic geometry codes of higher genus $g$. The difficulty naturally 
increases as the genus $g$ grows. In fact, the minimun distance is already unknown and NP-hard to determine when genus $g=1$. 
In this paper, we give the first study of the deep hole problem for elliptic curve codes, the genus $g=1$ case. 
Our main result is an explicit construction of a class of deep holes for long elliptic curve codes. 
We conjecture that our construction already gives the complete set of all deep holes for long elliptic curve codes. 
In the final section, we provide some heuristics and evidence about this completeness conjecture via its 
connection with problems and results in finite geometry.

Let $\f{q}^n$ be the $n$-dimensional vector space over the finite field $\f{q}$ of $q$ elements with characteristic $p$. For any vector (also, called \emph{word}) $ {x}=(x_1,x_2,\cdots,x_n)\in \f{q}^n$, the \emph{Hamming weight} $\mathrm{Wt}( {x})$ of $ {x}$ is defined to be the number of non-zero coordinates, i.e.,
$\mathrm{Wt}( {x})=|\left\{i\,|\,1\leqslant i\leqslant n,\,x_i\neq 0\right\}|.$
For integers $1\leq k\leq n$,
a \emph{linear $[n,k]$ code} $C$ is a $k$-dimensional linear subspace of $\f{q}^n$. The \emph{minimum distance} $d(C)$ of $C$ is the minimum Hamming weight among all non-zero vectors in $C$, i.e.,
$d(C)=\min\{\mathrm{Wt}( {c})\,|\, {c}\in C\setminus\{ {0}\}\}.$
A linear $[n,k]$ code $C\subseteq \f{q}^n$ is called an $[n,k,d]$ linear
code if $C$ has minimum distance $d$.  For error correction purpose, an $[n,k]$ code $C$ is good if its minimun distance $d$ 
is large. Ideally, for a given $[n,k]$-code $C$, one would like its minimun distance $d$ to be as large as possible. A well-known trade-off between
the parameters of a linear $[n,k,d]$ code is the Singleton bound
which states that
$$d\leqslant n-k+1.$$
 An $[n,k,d]$ code is called a \emph{maximum distance separable}
  (MDS) code if $d=n-k+1$. The MDS codes of dimension $1$ and their duals of dimension $n-1$ are called trivial MDS codes. The trivial MDS codes can have arbitrary length $n$. For length $n\leq q$, an important class of non-trivial MDS codes are Reed-Solomon codes with evaluation set $D$ chosen to be any $n$ rational points on the affine line $\mathbb{A}^1(\f{q})$. For $n=q+1$, one has the projective Reed-Solomon code which is also an MDS code. For length $n\geq q+2$, one does not 
 expect any non-trivial MDS code for odd $q$. This is the main part of the long standing MDS conjecture proposed by Segre~\cite{Segre55}.
  \begin{conj}[MDS conjecture]
  	The length $n$ of non-trivial MDS codes over the finite field $\f{q}$ cannot exceed $q+1$ with two exceptions: for $k\in\{3,q-1\}$ and even $q$ the length can reach $q+2.$
  \end{conj}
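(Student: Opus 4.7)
The plan is to translate the assertion into projective geometry and attack it via the theory of arcs, following Segre's classical program and Ball's modern refinement. A non-trivial $[n,k]$ MDS code over $\f{q}$ is equivalent to an $n$-arc in $\mathrm{PG}(k-1,\f{q})$: a set of $n$ projective points, no $k$ of which are linearly dependent (take columns of a generator matrix, read projectively). Under this dictionary, what must be shown is that $n\leq q+1$ for $3\leq k\leq q-1$, with the listed exceptions at $k\in\{3,q-1\}$ in even characteristic accounted for by the well-known hyperovals. Duality of MDS codes (interchanging $k$ and $n-k$) reduces to the range $k\leq (n+1)/2$, and projection from an arc point lets one descend from dimension $k-1$ to dimension $k-2$ while losing only one arc point; iterating this, the combinatorial heart of the problem is the planar case $k=3$.

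For $k=3$ and $q$ odd, the classical tool is Segre's lemma of tangents: if $K$ is a $(q+1)$-arc in $\mathrm{PG}(2,\f{q})$, then through each point of $K$ passes exactly one tangent line, and the $q+1$ tangents satisfy a multiplicative identity which, when summed against an arc point not on them, forces $K$ to be a conic. Assuming instead a hypothetical $(q+2)$-arc for $q$ odd would force two tangents through each arc point, and applying the identity on each side yields an algebraic contradiction. The sharper polynomial method of Blokhuis--Bruen--Thas, refined by Ball, constructs an explicit polynomial of controlled degree that vanishes on the set of tangent directions at an arc point, bounds the number of its $\f{q}$-rational zeros against the degree, and derives the desired contradiction directly.

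The main obstacle is the very last step: showing that the resulting polynomial has too many roots. This argument relies on the assertion that a low-degree polynomial over $\f{q}$ with many roots must be an $\f{p}$-linearized polynomial, which is clean when $q$ is prime (this is exactly where Ball's 2012 proof closes the conjecture) but becomes substantially more delicate for $q=p^h$ with $h\geq 3$, because intermediate Frobenius subfields create additional factorizations that the current techniques cannot rule out. Consequently, a realistic deliverable along this plan is: carry out the reduction to $\mathrm{PG}(2,\f{q})$ in full generality, prove the bound completely when $q$ is prime via Ball's polynomial argument, and for general prime powers record the best known partial bounds, explicitly flagging that the conjecture remains open beyond the prime and small-exponent prime-power cases.
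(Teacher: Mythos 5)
The statement you are asked to prove is labelled a \emph{conjecture} in the paper, and the paper offers no proof of it: it explicitly says the conjecture ``remains open in general'' and only cites the known partial results (Ball's theorem for prime $q$, the $k\leq\sqrt{q}$ case, and the elliptic-code case of Walker). Your proposal ends in the same place --- a survey of Segre's program plus Ball's prime-field argument, with an explicit admission that the general prime-power case is open --- so there is no completed proof to check. That is consistent with the paper's framing, but it means the deliverable is a plan, not a proof, and the gap is the conjecture itself.

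Beyond that, there is one genuine mathematical error in the sketch worth flagging: the claim that projection from arc points reduces ``the combinatorial heart of the problem'' to the planar case $k=3$. Projecting an $n$-arc in $PG(k-1,\f{q})$ from one of its points yields an $(n-1)$-arc in $PG(k-2,\f{q})$, so iterating down to the plane and applying the planar bound $q+2$ gives only $n\leq q+k-1$ --- essentially the trivial bound $n\leq q+k-2$ that the paper quotes from MacWilliams--Sloane for nontrivial MDS codes of dimension $k\geq 3$. The entire difficulty of the MDS conjecture is closing the gap between $q+k-2$ and $q+1$ uniformly in $k$, and this is exactly where Ball's coordinate-free, higher-dimensional lemma of tangents does work that a reduction to the plane cannot. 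A smaller quibble: the non-existence of $(q+2)$-arcs in $PG(2,\f{q})$ for odd $q$ follows from an elementary parity count of secants through an external point (every line meets a $(q+2)$-arc in $0$ or $2$ points, forcing $q$ even); Segre's lemma of tangents is needed for the deeper statement that every $(q+1)$-arc is a conic, not for this step. If you want your write-up to match what the paper actually uses, the honest statement is: the conjecture is open; it is known for $q$ prime \cite{Ball12}, for $k\leq\sqrt{q}$ \cite{Segre55,BL20}, and for elliptic curve codes \cite{Wal96}, and the paper invokes exactly these cases as hypotheses (1)--(3) of Theorem~\ref{thm:main} rather than the full conjecture.
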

 
 This conjecture remains open in general, although a lot of progress has been made. In particular, it is known to be true if $q$ is a 
 prime, see~\cite{Ball12} for further information. It is also known to be true for elliptic curve codes, see \cite{Wal96}. 
  
 Let $C$ be an $[n,k,d]$ linear code over $\f{q}$. The \emph{error distance} of any word $u\in\f{q}^n$ to $C$ is defined to be
$$d(u,C)=\min\{d(u,v)\,|\,v\in C\},$$
where $$d(u,v)=|\{i\,|\,u_{i}\neq v_{i},\,1\le i\le n\}|$$
is the Hamming distance between words $u$ and $v$.
Computing the error distance is essentially the ultimate decoding problem. Although there are decoding 
algorithms available for important codes such as Reed-Solomon codes and algebraic geometric codes, 
but these algorithms only work if the error 
distance $d(u, C)$ is small. If the error distance is large, then decoding becomes a problem of major difficulty. 
The extreme instance is the maximum error distance
\[
   \rho(C)=\max\{d(u,\, C)\,|\,u\in \f{q}^n\}
\]
which is called the \emph{covering radius} of $C$. 

The covering radius is perhaps the next most important quantity 
of a linear code, after the minimal distance. Covering radius of codes was studied extensively~\cite{CKMS85,CLS86,GS85,HKM78,MCL84,OST99}. 
There are very few families of codes with known covering radius, e.g., Reed-Solomon codes, the first order Reed-Muller code $RM(1,m)$ with even $m$, etc. For the first order Reed-Muller code $RM(1,m)$ with odd $m$, to determine the covering radius is already very difficult and wide open~\cite{Hou93}. A recent breakthrough is due to Schmidt~\cite{Sch19}.
Even for the projective Reed-Solomon code (which is an MDS code), the exact covering radius is surprisingly unknown, see~\cite{ZWK19} for the discussion.

If the distance from a word to the code achieves the covering radius of the code, then the word is called a \emph{deep hole} of the code. 
Deciding deep holes of a given code is an extreme instance of decoding. It is much harder than the covering radius problem, even for affine RS codes. The deep hole problem for Reed-Solomon codes was studied in~\cite{CMP11,CM07,Kaipa17,KW15,LW08,WL08,LZ15,Liao11,WH12,ZFL13,ZWK19,ZCL16}. 
For Reed-Solomon codes of length $n$ much smaller than $q$, deciding if a given word is a deep hole is equivalent to a 
general subset sum problem, which is NP-hard. For Reed-Solomon codes of length $n$ close to $q$, 
the deep hole problem 
can be solved if one assumes MDS conjecture or the rational normal curve conjecture in finite geometry, see \cite{ZWK19}. 
In summary, the deep hole problem is expected to be well structured for long Reed-Solomon codes,  
but no structure for short Reed-Solomon codes. 

In this paper, we will study deep holes of elliptic curve codes. 
For the definition and basics of elliptic curve codes, please see Section~\ref{Sec:Preliminaries}. 
Again, we expect that the deep hole problem is well structured for long elliptic curve codes, 
but no structure for short elliptic curve codes.  For this reason, we will mostly restrict to long elliptic curve codes 
in this paper. 

In practical applications, for codes of length $\leq q+2$, Reed-Solomon codes already play the best performance. For codes of length $n\geq q+3$, there are no non-trivial MDS codes by the MDS conjecture, and the next best thing would be near-MDS codes, i.e., $[n,k,d]$ codes with $d=n-k$. 
Long elliptic curve codes are known to be near-MDS and have the best parameters according to the Singleton bound. 
By the Hasse-Weil theorem, the length $n$ of an elliptic curve code $C$ is bounded above by $n \leq q + 2\sqrt{q}+1$. This 
significantly goes beyond the bound $n\leq q+1$ for Reed-Solom codes.  For good codes $C$ with length $n > q + 2\sqrt{q}+1$, 
one could use algebraic geometry codes of genus $g>1$ with many rational points. In this paper, we only consider the 
case $g=1$, which is already sufficiently interesting and difficult. 

%For short elliptic curve codes, the minimum distance and covering radius depend highly on the combinatorics of evaluation set $D$ %which makes the problem very complicated. 

From now on, we always assume that the finite field $\f{q}$ has odd characteristic throughout the rest of the paper. This is to avoid some small technical complications  and exceptions when the characteristic is two. 
In the statement of the following theorem, we focus on 
the functional elliptic curve $[n, k]$-code $C_{\mathcal{L}}(D, kO)$, see section 2 for precise definitions. 
With appropriate modification, the results hold for a general divisor $G$. We present the main result on minimun distance, 
covering radius, and deep holes of long elliptic curve codes.

\begin{thm}\label{thm:main}
    Let $E$ be an elliptic curve over $\f{q}$ with a rational point $O$, and $D\subset E(\f{q})\setminus\{O\}$ be a set of rational points with $n=|D|$. For $2\leq k\leq n-2$, let $C=C_{\mathcal{L}}(D, kO)$ be the functional elliptic curve $[n,k]$-code.  
Assume $n\geq q+3$ (the code is long). If any one of the following three conditions holds: 
	\begin{enumerate}
		\item[(1)] $n\geq 
		q+k$, or 
		\item[(2)] $q$ is a prime, or 
		\item[(3)] $k\leq \sqrt{q}$,
	\end{enumerate}
    then  we have the following results:
    \begin{enumerate}
    	\item[(i)] The minimum distance $d(C)=n-k$. 
    	\item[(ii)] The covering radius $\rho(C)=n-k-1$.
    	\item[(iii)] For any $P\in E(\f{q})\setminus D$, any vector $v\in C_{\mathcal{L}}(D, kO+P)\setminus C_{\mathcal{L}}(D, kO)$ is a deep hole of $C_{\mathcal{L}}(D, kO)$. 
    	\item[(iv)] If $k<n-2$, then the deep holes constructed in (iii) are all distinct and thus yield $(|E(\f{q})|-n)(q-1)q^{k}$ deep holes of $C_{\mathcal{L}}(D, kO)$. %If $k=n-2$, then $C_{\mathcal{L}}(D, kO+P)= C_{\mathcal{L}}(D, kO+Q)$ for all $P,Q\in E(\f{q})\setminus D$ and thus yield $(q-1)q^{k}$ deep holes of $C_{\mathcal{L}}(D, kO)$.
    \end{enumerate}
\end{thm}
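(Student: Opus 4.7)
My plan is to establish parts (i)--(iv) in order, with the main difficulty lying in the upper bound for the covering radius in (ii). Part (i) follows by combining the Goppa bound $d(C)\geq n-k$ with Walker's theorem \cite{Wal96} that nontrivial MDS elliptic curve codes have length at most $q+1$; since $n\geq q+3$, $C$ cannot be MDS, so $d(C)\leq n-k$ and equality holds.

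For (iii), I would fix $P\in E(\f{q})\setminus D$ and set $\tilde{C}_P := C_{\mathcal{L}}(D,kO+P)$; Riemann-Roch combined with injectivity of evaluation (valid because $k+1<n$) gives $\dim\tilde{C}_P=k+1$, and rerunning the argument of (i) on $\tilde{C}_P$ yields $d(\tilde{C}_P)=n-k-1$. For $v\in\tilde{C}_P\setminus C$ and any $c\in C$, the vector $v-c$ is a nonzero codeword of $\tilde{C}_P$, so $d(v,C)\geq n-k-1$; conversely, if $w\in\tilde{C}_P$ is any minimum weight codeword then $\mathrm{Wt}(w)=n-k-1<n-k=d(C)$ forces $w\notin C$, so the $q$ cosets $\alpha w+C$ for $\alpha\in\f{q}$ exhaust $\tilde{C}_P/C$, and $v$ lies in some such coset, giving $d(v,C)\leq\mathrm{Wt}(\alpha w)=n-k-1$. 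This proves (iii) and the lower bound $\rho(C)\geq n-k-1$.

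The crux is the reverse inequality $\rho(C)\leq n-k-1$, which I would prove by contradiction. If some $u\in\f{q}^n$ satisfied $d(u,C)\geq n-k$, then $\tilde{C}':=C+\f{q}u$ would be an $[n,k+1]$ code in which every nonzero codeword has weight at least $n-k$: rewriting $c+\alpha u$ with $\alpha\neq 0$ as $\alpha(u-(-\alpha^{-1}c))$ shows that its weight equals $d(u,-\alpha^{-1}c)\geq n-k$. Singleton then forces $\tilde{C}'$ to be an $[n,k+1,n-k]$ MDS code. Under each of the three hypotheses this contradicts $n\geq q+3$ via a known case of the MDS conjecture: condition (2) invokes Ball's theorem over prime fields \cite{Ball12} to conclude $n\leq q+1$; condition (3) invokes known small-dimension cases of the MDS conjecture; and condition (1) iteratively shortens $\tilde{C}'$ to an $[n-k+2,3]$ MDS code and then applies Segre's theorem \cite{Segre55} (no $(q+2)$-arc exists in $PG(2,q)$ for odd $q$) to get $n-k+2\leq q+1$, i.e., $n\leq q+k-1$, contradicting $n\geq q+k$. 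The hardest case is (1), where the naive shortening bound $n\leq q+k$ falls just short and one must feed in Segre's sharper arc-geometric input to close the gap.

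Finally, for (iv), for distinct $P,P'\in E(\f{q})\setminus D$, any common vector $v=\mathrm{ev}(f)=\mathrm{ev}(f')$ in $\tilde{C}_P\cap\tilde{C}_{P'}$ has $f-f'$ vanishing on $D$ and lying in $\mathcal{L}(kO+P+P'-\sum_{Q\in D}Q)$, a space of degree $k+2-n<0$ under the hypothesis $k<n-2$; so $f=f'\in\mathcal{L}(kO+P)\cap\mathcal{L}(kO+P')=\mathcal{L}(kO)$ and $v\in C$. Thus the sets $\tilde{C}_P\setminus C$ for varying $P$ are pairwise disjoint of common size $q^{k+1}-q^k=(q-1)q^k$, and summing over the $|E(\f{q})|-n$ choices of $P$ yields the claimed count $(|E(\f{q})|-n)(q-1)q^k$.
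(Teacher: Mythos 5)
Your overall architecture matches the paper's for (i) and (iv): part (i) is the same Goppa-bound-plus-Walker argument, and in (iv) your intersection $\mathcal{L}(kO+P)\cap\mathcal{L}(kO+P')=\mathcal{L}(kO)$ together with the negative-degree divisor $kO+P+P'-D$ is the same computation the paper packages as a comparison of pole orders (its Lemma on $\mathcal{L}(kO+P)\setminus\mathcal{L}(kO)$) plus injectivity of evaluation on $\mathcal{L}(kO+P+Q)$. The genuine divergence is in (ii). The paper shows $C$ is an \emph{optimal} $[n,k,n-k]$ code (no $[n-1,k,n-k]$ code exists, since such a code would be MDS) and then quotes Janwa's bound $\rho(C)\leq d(C)-\lceil d(C)/q^{k}\rceil$ for optimal codes; you instead argue that a word at distance $\geq n-k$ would generate an $[n,k+1,n-k]$ MDS supercode $C+\f{q}u$. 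Your route is more elementary and self-contained, but it shifts the dimension in which the MDS conjecture must be invoked from $k$ (paper) to $k+1$ (you), and under condition (3) this is a real off-by-one: the hypothesis gives $k\leq \sqrt{q}$, while you need the conjecture for dimension $k+1$, which may exceed $\sqrt{q}$. The fix is easy --- shorten $C+\f{q}u$ once to obtain an $[n-1,k,n-k]$ MDS code of length $n-1\geq q+2$ and dimension $k\leq\sqrt{q}$ --- but as written that case is not closed. Your treatment of condition (1) by shortening to a planar arc and invoking Segre is correct and recovers the $n\leq q+k-2$ bound the paper cites from MacWilliams--Sloane. In (iii), your derivation of $d(v,C)\leq n-k-1$ from a minimum-weight representative of the one-dimensional quotient $\tilde{C}_P/C$ is a nice direct argument (the paper simply bounds $d(v,C)$ by $\rho(C)$); you still need (ii) to conclude that distance $n-k-1$ means deep hole, which you acknowledge.

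One caveat your proof shares with the paper's: at the boundary $k=n-2$ the supercode $C+\f{q}u$ is an $[n,n-1,2]$ code, which is a \emph{trivial} MDS code existing for every length, so no contradiction with Ball's theorem or any case of the MDS conjecture is available. (The paper's optimality step collapses for the identical reason: $[n-1,n-2,2]$ codes are trivially MDS.) So neither argument establishes (ii) when $k=n-2$; for $2\leq k\leq n-3$, your proof, with the shortening fix above, is correct.
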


\begin{rem} Since $n\geq q+3$, the minimun distance $d(C)=n-k$ is always true for elliptic codes. 
We need one of conditions (1)-(3) to insure that the covering radius can be proved to be $n-k-1$. These conditions can be removed if one assumes the MDS conjecture or 
if one simply assumes that the covering radius is $n-k-1$. Without one of these conditions, the covering radius is unknown and 
thus we cannot prove that the words constructed in (iii) are deep holes. 

For the boundary case $k=n-2$, under the conditions in the above theorem, the covering radius $\rho(C)=n-k-1=1$. So all vectors in $\f{q}^n\setminus C$ are deep holes of $C$. And hence, the code $C=C_{\mathcal{L}}(D, (n-2)O)$ totally has $(q^2-1)q^{n-2}$ deep holes.% containing the $(q-1)q^{n-2}$ ones represented by $C_{\mathcal{L}}(D, (n-1)O)\setminus C$.
\end{rem}

A natural question is if the construction in (iii) is complete, i.e., if there are other deep holes except those in (iii).  
If it is complete, then there will be exactly $(|E(\f{q})|-n)(q-1)q^{k}$ deep holes and the elliptic deep hole problem 
is solved. For short elliptic curve codes, the construction in (iii) will not be complete. However, 
we have the following completeness conjecture for sufficiently long elliptic curve codes, namely, when $D$ is the full set $E(\f{q})\setminus\{O\}$ and thus $n = |E(\f{q})|-1\geq q+3$. This conjecture is the elliptic code analogue of the Cheng-Murray 
conjecture \cite{CM07} for deep holes of Reed-Solomon codes. 

\begin{conj}\label{conj:completeness}
	 Let $E$ be an elliptic curve over $\f{q}$ with $|E(\f{q})|\geq q+4$. Take any rational point $O\in E(\f{q})$ and set $D= E(\f{q})\setminus\{O\}$.  Let $2\leq k\leq |E(\f{q})|-4$. Then,  
	 $C_{\mathcal{L}}(D, (k+1)O)\setminus C_{\mathcal{L}}(D, kO)$ is the set of 
	 all deep holes of $C_{\mathcal{L}}(D, kO)$.
	 
\end{conj}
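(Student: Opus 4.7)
The inclusion $C_{\mathcal{L}}(D,(k+1)O)\setminus C_{\mathcal{L}}(D,kO)\subseteq\{\text{deep holes of }C_{\mathcal{L}}(D,kO)\}$ is immediate from Theorem~\ref{thm:main}(iii) applied with $P=O$, the unique point of $E(\f{q})\setminus D$, so the plan focuses on the reverse inclusion: every deep hole $u$ of $C_{\mathcal{L}}(D,kO)$ belongs to $C_{\mathcal{L}}(D,(k+1)O)$. Subtracting a nearest codeword first reduces to the case where $u$ has Hamming weight exactly $n-k-1=\rho(C)$, with zero set $S\subset D$ of size $k+1$. Since $\deg\bigl((k+1)O-\sum_{P\in S}P\bigr)=0$, the space $L\bigl((k+1)O-\sum_{P\in S}P\bigr)$ has dimension at most $1$, with equality iff $\sum_{P\in S}P=O$ in the group law of $E$. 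The desired conclusion $u\in C_{\mathcal{L}}(D,(k+1)O)$ therefore splits naturally into two parts: (a) the \emph{group-law condition} $\sum_{P\in S}P=O$, which produces an essentially unique $h\in L((k+1)O)$ with zero divisor $\sum_{P\in S}P$; and (b) the \emph{proportionality} $u=\lambda\cdot\mathrm{ev}_{D}(h)$ for some $\lambda\in\f{q}^{\times}$.

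The main tool will be what I call the \emph{deep-hole interpolation constraint}. For every $k$-subset $T\subset D$ with $\sum_{P\in T}P\neq O$, evaluation induces an isomorphism $C_{\mathcal{L}}(D,kO)\cong\f{q}^{T}$, so a unique codeword $c_T$ interpolates $u$ on $T$, and the bound $d(u,c_T)\ge n-k-1$ forces $c_T$ to agree with $u$ in at most one point of $D\setminus T$. For a $k$-subset $T$ with $\sum_{P\in T}P=O$, when $u|_T$ lies in the image of evaluation, the pencil of interpolating codewords combined with the same bound produces an injection from a large subset of $D\setminus T$ into $\f{q}^{\times}$. Setting $R:=\sum_{P\in S}P$, I would attack (a) by applying these constraints to the $k$-subsets of $S$ and to the $k$-subsets obtained from $S$ by single-element swaps with $D\setminus S$; in particular, the case $R\in S$ is handled by the pencil associated with $T_0=S\setminus\{R\}$, whose injection $D\setminus S\hookrightarrow\f{q}^{\times}$ yields a cardinality obstruction as soon as $n-k-1>q-1$. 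Once (a) is in hand, (b) should follow from a short Riemann-Roch argument: both $u$ and $\mathrm{ev}_{D}(h)$ are supported on $D\setminus S$, and the one-dimensionality of $L\bigl((k+1)O-\sum_{P\in S}P\bigr)$, together with one further application of the interpolation constraint, forces $u$ to be a scalar multiple of $\mathrm{ev}_{D}(h)$.

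The main obstacle is part (a), which is a rigidity statement dual in spirit to the MDS conjecture and closely aligned with the finite-geometry heuristic of the paper's concluding section. Appending $u$ as an extra row to the generator matrix of $C_{\mathcal{L}}(D,kO)$ yields a near-MDS $[n,k+1]$ code whose column system in $PG(k,\f{q})$ extends the elliptic normal curve configuration $\phi_{kO}(D)\subset PG(k-1,\f{q})$ induced by $|kO|$. The conjecture then asserts that every such near-MDS extension must come from the degree-$(k+1)$ embedding $\phi_{(k+1)O}:E\hookrightarrow PG(k,\f{q})$. Establishing this elliptic rigidity statement unconditionally appears to require genuinely new ideas in the finite geometry of elliptic normal curves, in the spirit of Segre-type lemmas, or else conditional inputs of the strength of the MDS conjecture itself; the interpolation constraint above gives partial progress but, without further geometric input, does not suffice to rule out all alternatives for $R$ when $n$ is small relative to $q+k$.
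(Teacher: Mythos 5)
The statement you are attacking is Conjecture~\ref{conj:completeness}, which the paper explicitly leaves open; there is no proof in the paper to compare against, only the heuristic reduction to finite geometry in Section~\ref{Sec:completenessofdeepholes}. Your plan is honest about this: after correctly reducing the reverse inclusion to (a) a group-law condition $\sum_{P\in S}P=O$ on the zero set of a minimal-weight deep hole and (b) a proportionality statement, you concede that step (a) is an open rigidity problem requiring ``genuinely new ideas.'' That concession is exactly where the conjecture lives, so the proposal is not a proof. Moreover, your reformulation --- appending $u$ to the generator matrix and asking that every near-MDS extension of the column configuration come from the degree-$(k+1)$ embedding of $E$ --- is precisely the dual of the paper's own reduction (Proposition~\ref{thm:geometryofdeephole} together with Corollary~\ref{thm:completecor}), which phrases completeness as non-extendability of the $(|E(\f{q})|;k)$-set $\{\phi_k(P)\mid P\in E(\f{q})\}$ plus the existence of a threshold $\mathcal{N}_0$. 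The paper records that this non-extendability is known only for $k\in\{3,4,6\}$ under hypotheses on $q$ and the $j$-invariant, and is merely conjectural already for $k=9$. So you have rediscovered the paper's obstruction rather than overcome it.

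Two smaller gaps are worth naming. First, even your ``immediate'' forward inclusion is not free: Theorem~\ref{thm:main}(iii) is proved only under one of hypotheses (1)--(3) (or assuming the MDS conjecture), because one must first know $\rho(C)=n-k-1$; the conjecture as stated imposes none of these, so the forward inclusion is itself conditional. Second, your cardinality obstruction from the pencil injection $D\setminus S\hookrightarrow\f{q}^{\times}$ only bites when $n-k-1>q-1$, i.e.\ $n>q+k$, which is essentially hypothesis (1) of Theorem~\ref{thm:main}; for the full-length code $n=|E(\f{q})|-1\leq q+2\sqrt{q}$ this restricts $k$ to roughly $k<2\sqrt{q}$. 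In the remaining range the interpolation constraints alone do not exclude syndromes lying off the elliptic normal curve, which is exactly the content of the open extendability problem.
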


The rest of this paper is organized as follows. In Section~\ref{Sec:Preliminaries}, we review the definition and basics of algebraic geometry (AG) codes. Regarding Reed-Solomon codes as AG codes constructed from the projective line, we give a new viewpoint of deep holes of Reed-Solomon codes to unify the two constructions in the previous works. In Section~\ref{Sec:coveringradius&deepholes}, we consider elliptic curve codes. We determine the covering radius, present deep holes and compute the syndromes of the deep holes under 
our suitable assumption. In Section~\ref{Sec:completenessofdeepholes}, we discuss the completeness of deep holes found in Section~\ref{Sec:coveringradius&deepholes}. The technique of group action of certain automorphisms of the elliptic curve are applied to yield more new deep holes if there is one new deep hole. The connection with finite geometry suggests a preliminary approach to the above completeness conjecture. 

\section{Preliminaries}\label{Sec:Preliminaries}

\subsection{Definitions}\label{Subsec:definitions}
In this subsection, We recall the definition and some basics of algebraic geometry codes. First fix some notations valid for the whole paper.

 \begin{itemize}
 	\item\emph{$\f{q}$ is a finite field of size $q$ where $q$ is an odd prime power. }
	\item\emph{
		$X/\f{q}$ is a geometrically irreducible smooth projective curve of genus $g$  over the finite field $\f{q}$
		with function field $\f{q}(X)$. }
	\item \emph{ $X(\f{q})$ is the set of all $\f{q}$-rational points on $X$.}
	\item\emph{ $D=\{P_{1},P_{2},\cdots,P_{n}\}$ is a proper
		subset of rational points $X(\f{q})$.}
	\item\emph{We also write $D=P_{1}+P_{2}+\cdots+P_{n}$.}
	\item\emph{ $G$ is a divisor of degree $k$
		($2g-2<k<n$) with $\mathrm{Supp}(G)\cap D=\emptyset$.} \end{itemize}

Let $V$  be a divisor on $X$. Denote by $\mathcal{L}(V)$ the $\f{q}$-vector space of all
rational functions $f\in \f{q}(X)$ with the principal divisor
$\mathrm{div}(f)\geqslant -V$, together with the zero function. It is well-known that $\mathcal{L}(V)$ is a 
finite dimensional vector space over $\f{q}$. And denote by $\Omega(V)$ the $\f{q}$-vector space of
all Weil differentials $\omega$ with divisor
$\mathrm{div}(\omega)\geqslant V$, together with the zero
differential (cf.~\cite{Stichtenoth}).

The residue AG code $C_{\Omega}(D, G)$ is defined to be the
image of the following residue map:
\begin{equation*}
\begin{array}{cccl}
\mathrm{res}: & \Omega(G-D) & \rightarrow & \f{q}^{n} \\
& \omega & \mapsto & (\mathrm{res}_{P_{1}}(\omega),\mathrm{res}_{P_{2}}(\omega),\cdots,\mathrm{res}_{P_{n}}(\omega))\ .
\end{array}
\end{equation*}
The code $C_{\Omega}(D, G)$ has parameters $[n, n-k-1+g, d\geq k-(2g-2)]$. And its dual
code, the functional AG code $C_{\mathcal{L}}(D, G)$ is defined to be the image of the following evaluation map:
\[
\mathrm{ev}: \mathcal{L}(G)\rightarrow \f{q}^{n};\, f\mapsto
(f(P_{1}),f(P_{2}),\cdots,f(P_{n})).\enspace
\]

As functions in $\mathcal{L}(G)$ have at most $k=\deg{G}$ different zeros, the minimum distance of $C_{\mathcal{L}}(D, G)$ is $d\geqslant n-k$. By the Riemann-Roch theorem, the functional AG code $C_{\mathcal{L}}(D, G)$ has parameters $[n, k-g+1, d\geqslant n-k]$. This together with the Singleton bound gives 
\[
k-(2g-2)\leq d(C_{\Omega}(D, G))\leq k-g+2
\]
and 
\[
 n-k\leq d(C_{\mathcal{L}}(D, G))\leq n-k+g.
\]

If $X=E$ is an elliptic curve over $\f{q}$, i.e., $g=1$, then $C_{\Omega}(D, G)$ has parameters $[n, n-k, d\geq k]$, 
$C_{\mathcal{L}}(D, G)$ has parameters $[n, k, d\geqslant n-k]$, and 
we only have the following two choices
for their minimum distance:
\[
d(C_{\Omega}(D, G))\in\{k,k+1\}\quad\mbox{and}\quad d(C_{\mathcal{L}}(D, G))\in\{n-k, n-k+1\}.
\]
It is easy to see that the two minimum distances take either  $\{k, n-k\}$ or $\{k+1, n-k+1\}$. In the first case, 
both $C_{\Omega}(D, G)$ and $C_{\mathcal{L}}(D, G)$ are near-MDS codes. In the second case, 
both $C_{\Omega}(D, G)$ and $C_{\mathcal{L}}(D, G)$ are MDS codes. It was shown that the MDS property is equivalent to certain general subset sum problem having no solution~\cite{chengqi}. So to determine the exact minimum distance of a general elliptic curve code is \textbf{NP}-hard under \textbf{RP}-reduction. However, non-trivial elliptic code of length $n\geq q+3$ 
are near-MDS by the MDS conjecture, i.e., the minimum distance take the smaller one. 
This suggests that long elliptic curve codes behave better. 
By using the Li-Wan sieve method~\cite{LW08}, the authors~\cite{LWZ15} improved the upper bound on the length of MDS elliptic curve codes without assuming the MDS conjecture. 

\begin{prop}[\cite{LWZ15}]\label{thm:mindist}
	Suppose that $n\geq (\frac 23+\epsilon)q$ and $q>\frac 4 {\epsilon^2}$,
	where $\epsilon $ is positive. There is a positive constant
	$C_{\epsilon}$ such that if
	$C_{\epsilon}\ln{q}<k<n-C_{\epsilon}\ln{q}$, then the $[n,k]$ elliptic curve code $C_{\mathcal{L}}(D, G)$	has minimum distance $n-k$ and hence is near-MDS.
\end{prop}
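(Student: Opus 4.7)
The statement translates, via Riemann--Roch on $E$, into a $k$-subset sum problem on the finite abelian group $E(\f{q})$. Since $g=1$, we already know $d(C_{\mathcal{L}}(D,G))\in\{n-k,n-k+1\}$, and the smaller value is attained exactly when some $f\in\mathcal{L}(G)$ vanishes on $k$ distinct points $P_{i_1},\dots,P_{i_k}\in D$; looking at $\mathrm{div}(f)+G\ge 0$ and taking degrees, this happens if and only if $P_{i_1}+\cdots+P_{i_k}\sim G$. Under the identification $\mathrm{Pic}^0(E)\cong E(\f{q})$, this is equivalent to the group-law identity $[G]=P_{i_1}\oplus\cdots\oplus P_{i_k}$, where $[G]$ is the Abel--Jacobi image of $G$. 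Thus it suffices to show that, under the stated hypotheses, every $b\in E(\f{q})$ (in particular $b=[G]$) can be written as a sum of $k$ distinct elements of $D$.

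The plan is to apply the Li--Wan distinct-coordinate sieve to count the number $N_k(b)$ of ordered $k$-tuples of distinct elements of $D$ with group sum $b$, and then to verify $N_k(b)>0$. Expanding via Fourier inversion on $E(\f{q})$, one writes
\[
k!\,N_k(b)=\frac{1}{|E(\f{q})|}\sum_{\chi}\bar{\chi}(b)\,F_k\bigl(s_1(\chi),\ldots,s_k(\chi)\bigr),
\]
where $s_j(\chi)=\sum_{P\in D}\chi(jP)$ and $F_k$ is the symmetric ``signed permanent'' produced by the sieve, equal to $k!\binom{n}{k}$ at the trivial character. The trivial character thus supplies a main term of order $\binom{n}{k}/|E(\f{q})|$, and everything comes down to bounding the contribution of all nontrivial $\chi$.

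For such $\chi$, an elementary complement bound suffices. Writing $D=E(\f{q})\setminus T$ and using $\sum_{P\in E(\f{q})}\chi(P)=0$, one gets $|s_j(\chi)|\le|T|=|E(\f{q})|-n$ whenever $j\chi\ne 1$; the exceptional characters with order dividing $j$ (at most $j^{2}$ of them) are handled separately. Inserting this uniform bound into the cycle-index form of $F_k$ guaranteed by the Li--Wan permanent identity, each nontrivial $\chi$ contributes at most a product of $\Theta(k)$ factors of size $|T|$, so the total error is $\binom{n}{k}(|T|/n)^{\Theta(k)}$ times a polynomial in $k$. The Hasse bound $|E(\f{q})|\le q+2\sqrt{q}+1$ together with $n\ge(2/3+\epsilon)q$ and $q>4/\epsilon^{2}$ forces $|T|/n\le 1-c$ for some $c=c(\epsilon)>0$; choosing $C_\epsilon$ proportional to $1/c$ then makes the error strictly smaller than the main term throughout the range $C_\epsilon\ln q<k<n-C_\epsilon\ln q$, yielding $N_k(b)>0$.

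The principal obstacle is not the character input, which is elementary, but controlling $F_k$ itself: a naive triangle-inequality estimate on the signed sum over $S_k$ would produce a bound of order $n^{k}$ rather than $|T|^{\Theta(k)}$, destroying the cancellation that makes the sieve work. The Li--Wan permanent-vs-determinant identity is exactly what rewrites this signed sum as a product over cycles, each factor of which is bounded by $|s_j(\chi)|$. Once that identity is in place, the remainder is bookkeeping with binomial coefficients to pin down the explicit $C_\epsilon$ and to cover both the lower and upper bounds on $k$ (the upper range being handled symmetrically by applying the same argument to the complementary subset-sum question, or equivalently to the dual code $C_\Omega(D,G)$, which has the complementary parameters).
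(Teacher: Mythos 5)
This proposition is quoted from \cite{LWZ15} and the paper gives no proof of its own, only the remark that the result is obtained ``by using the Li--Wan sieve method''; your sketch reconstructs exactly that argument --- reduction via Riemann--Roch and Abel--Jacobi to a $k$-subset-sum problem in the group $E(\f{q})$, Fourier inversion, the Li--Wan distinct-coordinate sieve with its cycle-index/permanent identity to preserve cancellation, the complement bound $|s_j(\chi)|\le|E(\f{q})|-n$ for nontrivial characters, and the symmetric treatment of the upper range of $k$. So the proposal is correct in outline and follows essentially the same route as the cited source.
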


It is further conjectured in~\cite{LWZ15} that the above condition $n\geq (\frac 23+\epsilon)q$ can be improved to $n\geq (\frac 12+\epsilon)q$.  This conjecture has been proved in the case $3\leq k \leq \frac{q+1-2\sqrt{q}}{10}$ in the recent paper~\cite{HR21}.

\subsection{A new viewpoint of deep holes of Reed-Solomon codes}\label{Sec:deepholesofRScodes}
In this subsection, we give a new viewpoint of deep holes of Reed-Solomon codes regarded as algebraic geometry codes of genus zero. The advantage of this new viewpoint is that the two classes of deep holes of generalized Reed-Solomon codes are essentially the same. And this method extends immediately to elliptic curve codes and even more general AG codes.

Let $\f{q}(x)$ be the rational function field. Let $O$ be the infinite point with uniformizer $\frac{1}{x}$ and $P_a$ be the finite point with uniformizer $x-a$ for any $a\in\f{q}$. For any subset $D=\{a_1,a_2,\cdots,a_n\}\subset\f{q}$, denote the corresponding set of finite points also by $D=\{P_{a_1}, P_{a_2},
\cdots, P_{a_n}\}$. For any integer $1\leq k\leq n$, the Reed-Solomon (RS) code $RS(D,k)$ is defined to be $C_{\mathcal{L}}(D, (k-1)O)$. The dual code of RS code $RS(D,k)$ is the residue AG code $C_{\Omega}(D, (k-1)O)$. 
Both of these codes are MDS codes and their covering radius are easy to determine. 

For RS codes with odd $q$ and $k\geq \lfloor \frac{q-1}{2}\rfloor$, it was proved in~\cite{Kaipa17} that there are only two classes of deep holes. The first class of deep holes of RS code $RS(D,k)$ was given in~\cite{CM07} which corresponds to polynomials of degree $k$. In fact, these deep holes are vectors in $C_{\mathcal{L}}(D, kO)\setminus C_{\mathcal{L}}(D, (k-1)O)$.
The second class of deep holes of RS code $RS(D,k)$ was given first in~\cite{WH12} for $D=\f{q}^*$ and later in~\cite{ZFL13} for general $D\subsetneq \f{q}$ which corresponds to the rational functions $\{\frac{b}{x-a}\,|\,a\in \f{q}\setminus D, b\in \f{q}^*\}$. In fact, these deep holes are exactly the 
vectors in 
$$\bigcup_{a\in \f{q}\setminus D}C_{\mathcal{L}}(D, (k-1)O+P_a)\setminus C_{\mathcal{L}}(D, (k-1)O).$$

In the language of AG codes, the two classes of known deep holes can be unified as follows: for any $P\in \mathbb{P}^1(\f{q})\setminus D$, the vectors in  $C_{\mathcal{L}}(D, (k-1)O+P)\setminus C_{\mathcal{L}}(D, (k-1)O)$ are deep holes of the Reed-Solomon code $RS(D,k)$. In the case $D$ is the full set $\f{q}$, the only possibility for $P$ is $O$. In this case, the above construction of deep holes is conjectured to be complete in \cite{CM07}, which has been proved to be true if $q$ is a prime \cite{ZCL16} or if $k>(q-1)/2$ \cite{Kaipa17}, using results from finite geometry.

\section{Covering radius and deep holes}\label{Sec:coveringradius&deepholes}

Before we move on to address the deep hole problem for elliptic curve codes, we need to first understand the 
covering radius for elliptic curve codes. But this is already a difficult problem as seen below. 

\subsection{Covering radius of elliptic curve codes}
%An $[n,k,d]$ linear code is called optimal if there does not exist $[n',k,d]$ linear code with $n'<n$. In particular, it follows from the MDS conjecture that any almost-MDS code with length $>q+2$ over finite field $\f{q}$ ($q$ is odd) is optimal.% Together with \cite[Theorem~9.1]{Jan90}, we can obtain the covering radius of long residue elliptic curve codes. 
In this subsection, we study the covering radius of elliptic curve codes. 
Just like the minimal distance, in general, there are only two possible choices for the covering radius of elliptic curve codes.

\begin{lem}\label{lem:2optsofcoveringradius}
	Let $\f{q}$ be a finite field with $q$ elements. Let $E$ be an elliptic curve over $\f{q}$ with a rational point $O$, and $D\subset E(\f{q})\setminus\{O\}$ be a set of rational points with $n=|D|$. For $2\leq k\leq n-2$, let $C=C_{\Omega}(D, kO)$ or $C=C_{\mathcal{L}}(D, kO)$. Then the covering radius of $C$ equals either $n-\dim(C)-1$ or $n-\dim(C)$.
\end{lem}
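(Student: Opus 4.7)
The plan is to establish the two-sided estimate $n-\dim C-1 \leq \rho(C)\leq n-\dim C$ by a soft argument that avoids anything specific to elliptic curves beyond the Riemann--Roch dimension formulas already recalled in Section~\ref{Sec:Preliminaries}.

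For the upper bound I would invoke the standard fact that the covering radius of any linear $[n,k']$ code is at most $n-k'$. A quick proof: after a column permutation a generator matrix can be put in systematic form $[I_{k'}\,|\,A]$, and then every received word $u\in\f{q}^n$ agrees in at least $k'$ coordinates with some codeword, giving $d(u,C)\leq n-k'$. Applied with $k'=\dim C$ to each of the two codes in the statement, this yields $\rho(C)\leq n-\dim C$.

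For the lower bound the idea is to use a slightly larger AG code as an explicit witness. Consider first $C=C_{\mathcal{L}}(D,kO)$. Since $k\leq n-2$ and $g=1$, Riemann--Roch gives $\dim C_{\mathcal{L}}(D,(k+1)O)=k+1$, so $C':=C_{\mathcal{L}}(D,(k+1)O)$ strictly contains $C$. Pick any $v\in C'\setminus C$. For every $c\in C$ the vector $v-c$ still lies in $C'\setminus C$, hence is nonzero and has weight at least
\[
d(C')\;\geq\; n-(k+1)\;=\; n-\dim C-1,
\]
so $d(v,C)\geq n-\dim C-1$, which forces $\rho(C)\geq n-\dim C-1$.

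The residue case $C=C_{\Omega}(D,kO)$ is essentially symmetric. Since $k\geq 2$, the code $C':=C_{\Omega}(D,(k-1)O)$ strictly contains $C$ with $\dim C'=n-k+1$ and $d(C')\geq k-1$, and the same argument gives $\rho(C)\geq k-1=n-\dim C-1$. The only thing to check carefully is that in each case the auxiliary divisors $(k+1)O$ and $(k-1)O$ still satisfy the genus-one Riemann--Roch hypothesis $2g-2<\deg G<n$; the assumption $2\leq k\leq n-2$ is exactly what is needed for both. I do not foresee a real obstacle here---the proof is really just bookkeeping on top of the dimension and minimum-distance inequalities already summarised in Section~\ref{Sec:Preliminaries}---and the genuine difficulty that Section~\ref{Sec:coveringradius&deepholes} will need to confront is the separate question of which of the two possibilities $n-\dim C-1$ or $n-\dim C$ actually occurs.
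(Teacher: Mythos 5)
Your proposal is correct and follows essentially the same route as the paper: the upper bound is the standard redundancy bound $\rho(C)\leq n-\dim C$ (the paper derives it from the full rank of a parity-check matrix rather than a systematic generator matrix, but this is the same fact), and the lower bound uses exactly the same witness, a vector in $C_{\mathcal{L}}(D,(k+1)O)\setminus C$ (resp.\ $C_{\Omega}(D,(k-1)O)\setminus C$), whose distance to $C$ is bounded below by the minimum distance of the enlarged code via the Singleton/Riemann--Roch estimates. Your check that $2\leq k\leq n-2$ is precisely what keeps the auxiliary divisors in the admissible range matches the role these hypotheses play in the paper's argument.
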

\begin{proof}
	Denote $k^\perp=n-\dim(C)$. Let $H\in\f{q}^{k^\perp\times n}$ be any parity-check matrix for the linear code $C$. Then the covering radius $\rho$ of $C$ is the smallest positive integer $\rho$ such that any vector $w\in \f{q}^{k^\perp}$ can be written as a linear combination of some $\rho$ columns of $H$. As $H$ is of full rank, we have 
	\[
	\rho\leq k^\perp.
	\]
	On the other hand,  we consider any vector 
	\[	
	v\in \begin{cases}
	C_{\Omega}(D, (k-1)O)\setminus C_{\Omega}(D, kO),&\mbox{if $C=C_{\Omega}(D, kO)$};\\
	C_{\mathcal{L}}(D, (k+1)O)\setminus C_{\mathcal{L}}(D, kO),& \mbox{if $C=C_{\mathcal{L}}(D, kO)$}.
	\end{cases}
	\]
	By the Singleton bound, 
	$$k^{\perp} = n -\dim(C) \leq d(C) \leq n-\dim(C) +1 = k^\perp +1,$$
	$$k^\perp -1 \leq d(C_{\Omega}(D, (k-1)O)) \leq k^\perp, $$
	$$k^\perp -1 \leq d(C_{\mathcal{L}}(D, (k+1)O)) \leq k^\perp, $$we deduce that 
	\[
	\rho\geq d(v, C)\geq \begin{cases}
	\min(d(C), d(C_{\Omega}(D, (k-1)O)))&\mbox{if $C=C_{\Omega}(D, kO)$}\\
	\min(d(C), d(C_{\mathcal{L}}(D, (k+1)O)))& \mbox{if $C=C_{\mathcal{L}}(D, kO)$}
	\end{cases}\geq k^\perp -1.
	\]
	So $\rho\in\{k^\perp, k^\perp-1\}=\{n-\dim(C)-1, n-\dim(C)\}$.
\end{proof}

\begin{rem}
For short elliptic curve codes, to determine the minimum distance is already \textbf{NP}-hard~\cite{chengqi} under \textbf{RP}-reduction. To determine the covering radius is even harder, which not only depends on the MDS property but also on certain extendability of MDS or near-MDS codes. 
	For instance, if $C$ is MDS, then $k^\perp=d(C)-1$. The covering radius of $C$ can still take any one of the two choices $\{n-\dim(C)-1, n-\dim(C)\}$. For the covering radius of $C$ to be $n-\dim(C)$, it is equivalent to that there is a vector $v\in\f{q}^n\setminus C$ such that $C\oplus \f{q}v$ is MDS. Even for $v\in C_{\mathcal{L}}(D, kO+P)\setminus C_{\mathcal{L}}(D, kO)$, the problem is already hard which is equivalent to certain subset sum problem.
\end{rem}	
	
However, for long elliptic curve codes of length $n\geq q+3$, the problem becomes easier, at least under the MDS conjecture. And long AG codes are preferred in applications anyway.

 Recall that an $[n,k,d]$ linear code is called {\it optimal} if there does not exist $[n',k,d]$ linear code with $n'<n$.

\begin{thm}\label{thm:coveringradius}
	 Let $\f{q}$ be a finite field with $q$ elements. Let $E$ be an elliptic curve over $\f{q}$ which has at least $q+4$ rational points. Let $O\in E(\f{q})$ be a rational point on $E$ and $D\subset E(\f{q})\setminus\{O\}$ be a set of rational points with $n=|D|\geq q+3$. For $2\leq k\leq n-2$, let $C=C_{\Omega}(D, kO)$ or $C=C_{\mathcal{L}}(D, kO)$. 
	 The minimun distance of $C$ is given by $d(C)= n-\dim(C)$. If we further assume that the MDS conjecture holds 
	 for all $[n-1, \dim(C)]$-codes over $\f{q}$, then 
	 the covering radius of $C$ is given by $\rho(C)=d(C)-1 = n-\dim(C)-1$.
\end{thm}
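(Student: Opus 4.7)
From the discussion preceding Lemma~\ref{lem:2optsofcoveringradius}, an elliptic curve code with $2\le\dim(C)\le n-2$ has $d(C)\in\{n-\dim(C),\,n-\dim(C)+1\}$, the larger value corresponding to $C$ being MDS. Walker's theorem (\cite{Wal96}) forbids a non-trivial MDS elliptic code of length exceeding $q+1$, so the hypotheses $\dim(C)\in[2,n-2]$ and $n\ge q+3$ rule out $C$ being MDS and give $d(C)=n-\dim(C)$.

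\textbf{Covering radius.} Lemma~\ref{lem:2optsofcoveringradius} already localizes $\rho(C)\in\{n-\dim(C)-1,\,n-\dim(C)\}$, so the plan is to derive a contradiction from the assumption that $\rho(C)$ takes the larger value $d(C)$. Pick $u\in\f{q}^n$ realizing $d(u,C)=n-\dim(C)$ and subtract a nearest codeword to arrange $\mathrm{Wt}(u)=n-\dim(C)$. The key observation is that $\tilde C=C\oplus\f{q}u$ has minimum weight exactly $n-\dim(C)$: for $\lambda\in\f{q}^*$ and $c\in C$,
\[
\mathrm{Wt}(\lambda u+c)=\mathrm{Wt}(u+\lambda^{-1}c)\ge d(u,C)=n-\dim(C),
\]
and nonzero $c\in C$ already have $\mathrm{Wt}(c)\ge d(C)=n-\dim(C)$. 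Hence $\tilde C$ is an $[n,\dim(C)+1,n-\dim(C)]$-MDS code. Shortening $\tilde C$ at any coordinate (every coordinate is a support coordinate for an MDS code of distance $\ge 2$) yields an $[n-1,\dim(C),n-\dim(C)]$-MDS code. For $\dim(C)\in[2,n-3]$ this code is non-trivial and has length $n-1\ge q+2$, contradicting the hypothesized MDS conjecture for $[n-1,\dim(C)]$-codes. Thus $\rho(C)=n-\dim(C)-1$.

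The hard part I anticipate is the boundary case $\dim(C)=n-2$ (namely $k=n-2$ for $C_{\mathcal{L}}$ or $k=2$ for $C_{\Omega}$), in which the shortened code becomes the trivial parity-check $[n-1,n-2,2]$-MDS code and the hypothesized MDS conjecture is vacuous. There, $\rho(C)\le 1$ is equivalent to the assertion that $C^\perp$ contains no vector with all coordinates nonzero. I would handle this by a direct Riemann--Roch argument on $E$: such an all-nonzero vector in $C_{\Omega}(D,(n-2)O)$ would correspond to a Weil differential with simple poles at every point of $D$, forcing an effective $\f{q}$-divisor of degree $2$ in the linear class of $D-(n-2)O$ whose support avoids $D$. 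Ruling out such a divisor under $|E(\f{q})|\ge q+4$ would close the case, using the one-parameter family of effective divisors in this class together with point-counting on $E$.
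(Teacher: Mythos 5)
Your treatment of the minimum distance and your main covering-radius argument are correct and follow essentially the same route as the paper. The paper likewise gets $d(C)=n-\dim(C)$ from the validity of the MDS conjecture for elliptic codes, and then derives $\rho(C)\le d(C)-1$ from the non-existence of an $[n-1,\dim(C),n-\dim(C)]$ MDS code; the only difference is that the paper passes from ``no such shorter code exists, hence $C$ is length-optimal'' to the covering-radius bound by citing Janwa's inequality $\rho(C)\le d(C)-\lceil d(C)/q^{\dim(C)}\rceil$ for optimal codes, whereas you re-prove that implication from scratch via the standard construction $\tilde C=C\oplus\f{q}u$ followed by shortening. Your inlined argument is sound (the weight estimate for $\lambda u+c$ and the fact that shortening an MDS code at any coordinate drops both length and dimension by one while preserving MDS-ness both check out), and it makes the proof self-contained at no real cost; for $2\le\dim(C)\le n-3$ your proof is complete.

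The genuine gap is the boundary case $\dim(C)=n-2$, which you correctly flag but only sketch. Two remarks. First, the paper's own proof has exactly the same hole: its claim that the MDS conjecture excludes an $[n-1,\dim(C),n-\dim(C)]$ MDS code is false when $\dim(C)=n-2$, since the trivial $[n-1,n-2,2]$ MDS code exists, so $C$ is not length-optimal and Janwa's bound gives nothing. Second, and more importantly, your proposed Riemann--Roch repair cannot succeed, because the conclusion itself fails in this case. Take $C=C_\Omega(D,2O)$, so that $C^\perp=C_{\mathcal{L}}(D,2O)$ has generator matrix with columns $(1,x(P_i))^T$ in the basis $\{1,x\}$ of $\mathcal{L}(2O)$. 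Every column has first coordinate $1$, so as points of $PG(1,\f{q})$ the columns all avoid $(0:1)$; hence the syndrome $(0,1)^T$ cannot be expressed using a single column of this parity-check matrix, and $\rho(C)=2=d(C)$ rather than $d(C)-1$. Equivalently, the all-nonzero dual codeword you hope to exclude actually exists (e.g.\ $x-c$ for $c$ outside the set of $x$-coordinates of $D$, and a count of effective degree-$2$ divisors in the relevant class supported off $E(\f{q})$ produces one for $C=C_{\mathcal{L}}(D,(n-2)O)$ as well). So the correct resolution of your ``hard part'' is not to prove it but to exclude $\dim(C)=n-2$ from the statement, i.e.\ to restrict to $3\le k\le n-3$, consistent with the hypothesis $k<n-2$ already appearing in Theorem~\ref{thm:main}(iv).
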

\begin{proof}
	The MDS conjecture is known to be true for the elliptic code $C$, see \cite{Wal96}. 
	Since $n\geq q+3$, this implies that 
	the code $C$ is not MDS, and hence must be near-MDS with parameters $[n,\dim(C),d(C)=n-\dim(C)]$. In particular, the minimun distance is $d(C)= n -\dim(C)$.  Now $n-1\geq q+2$ and $q$ is odd. By the MDS conjecture for $[n-1, \dim(C)]$-codes, we deduce that there is no MDS code with parameters $[n-1,\dim(C),d(C)=n-\dim(C)]$. So the code $C$ is optimal. By~\cite[Corollary~8.1]{Jan90}, we have the following bound on the covering radius 
	\[
	\rho(C)\leq d(C)-\lceil \frac{d(C)}{q^{\dim(C)}}\rceil.
	\]	
	Since $d(C) = n -\dim(C)>0$, it follows that 
	\[
	\rho(C)\leq d(C)-1.
	\]
	On the other hand, by Lemma~\ref{lem:2optsofcoveringradius}, $\rho(C)\in\{d(C)-1, d(C)\}.$
	We conclude  that $\rho(C)=d(C)-1.$
	
\end{proof}	

Now, any non-trivial MDS code $C$ of dimension $k\geq 3$ over the finite field $\f{q}$ has length $n\leq q+k-2$ by \cite[Chapter~11, Theorem~11]{Mac}. The MDS conjecture holds for prime fields~\cite{Ball12}, and also for general $q$ with $k\leq \sqrt{q}$~\cite{Segre55,BL20}. This means that under one of the conditions (1)-(3) in Theorem~\ref{thm:main}, 
the MDS conjecture holds for all $[n, k]$-codes and $[n-1, k]$-codes over $\f{q}$. 
As a consequence, we obtain the conclusions (i) and (ii) of Theorem~\ref{thm:main} from Theorem~\ref{thm:coveringradius}. 

\subsection{Deep holes of elliptic curve codes and their syndromes} 
In this subsection, we first prove Theorem~\ref{thm:main}(iii)-(iv). In order to study further geometry of deep holes, 
we will later focus on residue elliptic curve codes since they have an explicit parity-check matrix of the form (\ref{equ:parity-checkmat}). 
The residue and functional algebraic geometry codes can be represented by each other~(cf.~\cite[Proposition~2.2.10]{Stichtenoth}). 
Thus, in principal, it is sufficient to consider the residue elliptic curve codes.

\begin{lem}\label{thm:orderofpole}
	For any rational point $P\in E(\f{q})\setminus D$ and any $f\in \mathcal{L}(kO+P)\setminus\mathcal{L}(kO)$, we have 
	\begin{itemize}
		\item[1.] If $P\neq O$, then $P$ is a simple pole of $f$.
		\item[2.] If $P=O$, then $O$ is a pole of $f$ of order $k+1.$
	\end{itemize}
\end{lem}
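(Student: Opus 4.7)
The proof is essentially a direct unpacking of the definition of $\mathcal{L}(\cdot)$, combined with the hypothesis $f \in \mathcal{L}(kO+P)\setminus\mathcal{L}(kO)$. The plan is to translate the membership/non-membership conditions into inequalities on the valuations $v_Q(f)$ at every closed point $Q$, and then do a case analysis on whether $P=O$ or $P\neq O$.

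First I would recall that $f \in \mathcal{L}(V)$ is equivalent to $\mathrm{div}(f) + V \geq 0$, i.e., $v_Q(f) \geq -v_Q(V)$ at every point $Q$. Applied to $V = kO+P$, this gives $v_O(f) \geq -k$, $v_P(f) \geq -1$ (in the case $P\neq O$) or $v_O(f) \geq -(k+1)$ (in the case $P = O$), and $v_Q(f) \geq 0$ at all other points. In particular $f$ has no poles outside $\{O,P\}$.

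For Case 1, suppose $P\neq O$. Since $f \notin \mathcal{L}(kO)$, there must exist some point $Q$ with $v_Q(f) < -v_Q(kO)$. At any $Q \notin\{O,P\}$ this would force $v_Q(f) < 0$, contradicting what we just noted; at $Q=O$ this would force $v_O(f) < -k$, contradicting the lower bound from $f\in\mathcal{L}(kO+P)$. Hence the only possibility is $Q=P$, and we obtain $v_P(f) < 0$. Combined with $v_P(f) \geq -1$, this gives $v_P(f) = -1$, so $P$ is a simple pole of $f$.

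For Case 2, $P=O$, we have $kO+P = (k+1)O$, so $\mathcal{L}(kO+P) = \mathcal{L}((k+1)O)$. The membership in $\mathcal{L}((k+1)O)$ yields $v_O(f) \geq -(k+1)$ and $v_Q(f)\geq 0$ for $Q\neq O$; the non-membership in $\mathcal{L}(kO)$ forces $v_O(f) < -k$, since this is the only point where the two divisors differ. Combining $-(k+1)\leq v_O(f) < -k$ gives $v_O(f) = -(k+1)$, so $O$ is a pole of order exactly $k+1$.

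There is no real obstacle here — the statement is essentially a tautological consequence of the definition of the Riemann–Roch space, together with the observation that the divisors $kO$ and $kO+P$ differ by exactly one point (either $P$ or $O$) with multiplicity one. The only thing to be careful about is treating the two cases $P=O$ and $P\neq O$ separately, because in the first case the ``extra'' pole allowance is absorbed into the existing pole at $O$ rather than producing a new pole.
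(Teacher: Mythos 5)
Your proof is correct; it is the standard valuation-by-valuation unpacking of the definition of $\mathcal{L}(V)$, which is evidently the argument the paper has in mind, since the paper states this lemma without proof as an immediate consequence of the definitions. Nothing to add.
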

\begin{flushleft}
	{\it Proof of Theorem~\ref{thm:main}(iii)-(iv).} For any rational point $P\in E(\f{q})\setminus D$ and any vector $v\in C_{\mathcal{L}}(D, kO+P)\setminus C$, we have 
	\[
	n-k-1=\rho(C)\geq d(v, C)\geq \min (d(C), d(C_{\mathcal{L}}(D, kO+P)))\geq n-(k+1).
	\]
	So $d(v, C)=n-k-1$. That is, the vector $v$ is a deep hole. This proves Theorem~\ref{thm:main}(iii). 
	\end{flushleft}

  Next, we prove Theorem~\ref{thm:main}(iv). For any two distinct rational points $P, Q\in E(\f{q})\setminus D$, since functions in $\mathcal{L}(kO+P+Q)$ have at most $k+2<n$ zeros, the evaluation map $\mathrm{ev}\,:\,\mathcal{L}(kO+P+Q)\rightarrow \f{q}^n$ is injective. The codes $C_{\mathcal{L}}(D, kO+P)$ and $C_{\mathcal{L}}(D, kO+Q)$ are two sub-codes of the code $C_{\mathcal{L}}(D, kO+P+Q)$.
 Now, if there exsit $f\in \mathcal{L}(kO+P)\setminus \mathcal{L}(kO)$ and $g\in \mathcal{L}(kO+Q)\setminus \mathcal{L}(kO)$ such that $\mathrm{ev}(f)=\mathrm{ev}(g)$, then $f-g$ regarded as a function in $\mathcal{L}(kO+P+Q)$ satisfies
 \[
 \mathrm{ev}(f-g)=0.
 \]
  Since the evaluation map $\mathrm{ev}\,:\,\mathcal{L}(kO+P+Q)\rightarrow \f{q}^n$ is injective, we have $f-g=0$, i.e., $f=g$ in $\mathcal{L}(kO+P+Q)$.
  This is impossible according to Lemma~\ref{thm:orderofpole} by comparing the orders of poles $P$ and $Q$.
 So the sets $C_{\mathcal{L}}(D, kO+P)\setminus C$ and $C_{\mathcal{L}}(D, kO+Q)\setminus C$ are disjoint for any distinct rational points $P, Q\in E(\f{q})\setminus D$. 
  
  For any rational point $P\in E(\f{q})\setminus D$, we have 
  \[
  |C_{\mathcal{L}}(D, kO+P)\setminus C|=q^{k+1}-q^k=(q-1)q^k.
  \]
  According to the above disjointness, there are totally 
  \[
  |E(\f{q})\setminus D|(q-1)q^k=(|E(\f{q})|-n)(q-1)q^k
  \]
  deep holes provided by the theorem.
  %\textbf{The case $k= n-2$.}  For any two distinct rational points $P, Q\in E(\f{q})\setminus D$, by the Riemann-Roch theorem, there exists a function $h\in \f{q}(E)$ such that 
  %\[
  %\mathrm{div}(h)=D-kO-P-Q.
  %\]
  %At least one of $P$ and $Q$ is not $O$. Without loss of generality, we may assume $P\neq O$. For any $f\in \mathcal{L}(kO+P)\setminus \mathcal{L}(kO)$, by Lemma~\ref{thm:orderofpole} the point $P$ is a simple pole of $f$. Let $a=-\frac{\mathrm{res}_P(f)}{\mathrm{res}_P(h)}$ and $g=ah+f\in \mathcal{L}(kO+P+Q)$. It is easy to see that the point $P$ is not a pole of $g$ and $Q$ is a simple pole of $g$, we have 
 % \[
 % g\in \mathcal{L}(kO+Q)\setminus \mathcal{L}(kO).
 % \]
 % Note that 
 % \[
 % \mathrm{ev}(g)=\mathrm{ev}(ah+f)=a\cdot\mathrm{ev}(h)+\mathrm{ev}(f)=\mathrm{ev}(f).
 % \]
 % So $C_{\mathcal{L}}(D, kO+P)=C_{\mathcal{L}}(D, kO+Q)$ for any two rational points $P, Q\in E(\f{q})\setminus D$ and they provide $|C_{\mathcal{L}}(D, kO+P)\setminus C|=(q-1)q^k$ deep holes of $C$. 
  \begin{flushright}
  	$\qed$
  \end{flushright}
  
\begin{rem}
    For the case $k= n-2$, let $P, Q\in E(\f{q})\setminus D$ be two distinct rational points. Then the spaces $C_{\mathcal{L}}(D, kO+P)=C_{\mathcal{L}}(D, kO+Q)$ if and only if the divisor $D-kO-P-Q$ is principal.
\end{rem}

In the rest of this subsection, we focus on the residue elliptic codes. We will compute the syndromes of deep holes. 
This will be useful to connect to finite geometry. 
Because we only consider elliptic curves over finite fields of odd characteristic, we may assume that the elliptic curve is given by $y^2=x^3+sx+t$ ($s,t\in\f{q}$) together with the infinity point $O$. 
Let 
$$D=\{P_i=(\alpha_i, \beta_i)\mid i=1,2,\cdots,n\}\subset E(\f{q})\setminus\{O\}\}$$
be a set of rational points on $E$ of size $n=|D|$. Let $C=C_{\Omega}(D, kO)$ be the residue elliptic curve code. Then the dual code of $C$ is $C^\perp=C_{\mathcal{L}}(D, kO)$. The Riemann-Roch space $\mathcal{L}(kO)$ has a basis $\{x^iy^j\mid i\in\mathbb{Z}_{\geq 0},\,j\in\{0,1\},\,2i+3j\leq k\}$. So we may choose a parity-check matrix of $C$ as follows:
\begin{equation}\label{equ:parity-checkmat}
	H(k)=\left(
\begin{array}{cccc}
1 & 1 & \cdots & 1  \\
\alpha_1 & \alpha_2 & \cdots & \alpha_n  \\
\vdots & \vdots & \ddots & \vdots  \\
\alpha_1^{\lfloor\frac k2\rfloor} & \alpha_2^{\lfloor\frac k2\rfloor} & \cdots & \alpha_n^{\lfloor\frac k2\rfloor} \\
\beta_1 & \beta_2 & \cdots & \beta_n  \\
\alpha_1\beta_1 & \alpha_2\beta_2 & \cdots & \alpha_n\beta_n  \\
\vdots & \vdots & \ddots & \vdots  \\
\alpha_1^{\lfloor\frac {k-3}2\rfloor}\beta_1 & \alpha_2^{\lfloor\frac {k-3}2\rfloor}\beta_2 & \cdots & \alpha_n^{\lfloor\frac {k-3}2\rfloor}\beta_n \\
\end{array}
\right).
\end{equation}

\begin{thm}\label{thm:deephole&syn}
	Notations as above. Suppose the $[n, n-k]$-code $C=C_{\Omega}(D, kO)$ has covering radius $\rho(C)=n-\dim(C)-1=k-1$. Let $P=(\alpha, \beta)\in E(\f{q})\setminus D$ be any rational point on the elliptic curve $E$. Then we have
	\begin{enumerate}
		\item Any vector $v\in C_{\Omega}(D, kO-P)\setminus C$ is a deep hole of $C$.
		\item If $P=O$, then the syndrome of $v$ is
		      \[
		      H(k)v^T=\begin{cases}
		      (0,\cdots,0,\sum_{i=1}^{n}\alpha_i^{\frac k2}v_i,0,\cdots, 0)^T,&\mbox{if $k$ is even;}\\
		      (0,\cdots,0,\sum_{i=1}^{n}\alpha_i^{\frac {k-3}2}\beta_iv_i)^T,&\mbox{if $k$ is odd.}\\
		      \end{cases}
		      \]
		\item If $P\neq O$, then the syndrome of $v$ is
		      \[
		      H(k)v^T=b(1,\alpha, \cdots, \alpha^{\lfloor\frac k2\rfloor}, \beta, \beta\alpha,\cdots, \beta\alpha^{\lfloor\frac {k-3}2\rfloor})^T,
		      \]
		      where $b=\sum_{i=1}^{n}v_i \not=0.$
	\end{enumerate}
\end{thm}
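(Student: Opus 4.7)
The plan is to establish (1) by mirroring the argument from Theorem~\ref{thm:main}(iii), and then to derive (2) and (3) by viewing the syndrome of $v$ as a linear functional on $\mathcal{L}(kO)$ whose kernel contains the codimension-one subspace $\mathcal{L}(kO-P)$.

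For (1), the key observation is that $C = C_{\Omega}(D, kO) \subseteq C_{\Omega}(D, kO-P)$, since enlarging $G$ shrinks $\Omega(G-D)$ and hence the residue code; moreover the latter code has minimum distance at least $\deg(kO-P) = k-1$ by the genus-one AG distance bound. For any $v \in C_{\Omega}(D, kO-P) \setminus C$ and $c \in C$, the difference $v-c$ remains a nonzero element of $C_{\Omega}(D, kO-P)$, forcing $d(v,c) \geq k-1$. Combined with the upper bound $d(v,C) \leq \rho(C) = k-1$, this forces $d(v,C) = k-1$, so $v$ is a deep hole.

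For (2) and (3), I will invoke the Stichtenoth duality $C_{\Omega}(D, G) = C_{\mathcal{L}}(D, G)^{\perp}$ from \cite[Proposition~2.2.10]{Stichtenoth}. The $i$-th entry of $H(k) v^{T}$ equals $T(\phi_i) := \sum_{j=1}^{n} \phi_i(P_j) v_j$, where $\phi_i$ is the basis element of $\mathcal{L}(kO)$ labeling the corresponding row. Since $v \in C_{\Omega}(D, kO-P) = C_{\mathcal{L}}(D, kO-P)^{\perp}$, the linear functional $T \colon \mathcal{L}(kO) \to \f{q}$ vanishes on $\mathcal{L}(kO-P)$. For case (2), $\mathcal{L}(kO-O) = \mathcal{L}((k-1)O)$, and the only basis monomial of $\mathcal{L}(kO)$ with $2i+3j = k$ is $x^{k/2}$ if $k$ is even and $x^{(k-3)/2}y$ if $k$ is odd; thus $T$ kills every other basis vector, and the single surviving component sits in exactly the row of $H(k)$ claimed.

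For case (3), $\mathcal{L}(kO-P) = \ker(\mathrm{ev}_P)$ is a hyperplane in $\mathcal{L}(kO)$, and $T$ is nonzero because $v \notin C$. Two nonzero linear functionals with the same codimension-one kernel must be proportional, so $T = b \cdot \mathrm{ev}_P$ for some $b \in \f{q}^{\ast}$. Evaluating this identity at $\phi_i = 1$ yields $b = \sum_j v_j$ (and in particular $b \neq 0$), while evaluating at the remaining basis elements $x, \ldots, x^{\lfloor k/2 \rfloor}, y, xy, \ldots, x^{\lfloor (k-3)/2 \rfloor}y$ reads off each syndrome component as $b \cdot \phi_i(P)$, producing the stated formula. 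The main conceptual point is recognizing the syndrome as a linear functional on $\mathcal{L}(kO)$ and exploiting duality to locate its kernel; once this pivot is made, (2) becomes a basis comparison and (3) reduces to the one-dimensional classification of functionals with a prescribed hyperplane kernel, so no further curve-arithmetic work is needed.
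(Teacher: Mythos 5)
Your proof is correct, and for parts (1) and (2) it follows the paper's argument essentially verbatim: (1) is the same sandwich $k-1=\rho(C)\geq d(v,C)\geq \min(d(C),d(C_{\Omega}(D,kO-P)))\geq k-1$, and (2) is the same observation that $v$ is orthogonal to $C_{\mathcal{L}}(D,(k-1)O)$, so only the row of $H(k)$ indexed by the unique monomial of pole order exactly $k$ can survive. For part (3) you take a genuinely cleaner route: the paper writes down an explicit spanning set $\{(x-\alpha)x^i,\ x^j(y-\beta)\}$ of $\mathcal{L}(kO-P)$, records the resulting matrix identity, and then performs successive row operations to transform it back into $H(k)v^T=b\,\phi_k(P)$; you instead observe that the syndrome is the linear functional $T(f)=\sum_j f(P_j)v_j$ on $\mathcal{L}(kO)$, that $T$ is nonzero (as $v\notin C$) and vanishes on the hyperplane $\mathcal{L}(kO-P)=\ker(\mathrm{ev}_P)$, and conclude $T=b\cdot\mathrm{ev}_P$ by the one-dimensionality of functionals with a prescribed hyperplane kernel. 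Both rest on the same duality $C_{\Omega}(D,G)=C_{\mathcal{L}}(D,G)^{\perp}$, but your version avoids the row reduction entirely, determines $b=\sum_j v_j$ and $b\neq 0$ in one stroke, and sidesteps the paper's slightly awkward explicit basis of $\mathcal{L}(kO-P)$; the paper's computation, on the other hand, makes the intermediate matrices visible, which it reuses rhetorically when connecting syndromes to the point set $\phi_k(E(\f{q}))$ in the finite-geometry section. No gaps.
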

\begin{proof}
Since $d(C) \geq n-\dim(C) = n-(n-k)=k$, 
the first statement follows from 
\[
  k-1=\rho(C)\geq d(v, C)\geq \min(d(C), d(C_{\Omega}(D, kO-P)))\geq k-1.
\] 	
	
 Next, we compute the syndrome $H(k)v^T$ by separating two cases: $P=O$ and $P\in E(\f{q})\setminus (D\cup \{O\})$.

For the case $P=O$, any vector $v\in C_{\Omega}(D, (k-1)O)\setminus C_{\Omega}(D, kO))$ can be killed by vectors in $C_{\mathcal{L}}(D, (k-1)O)$ but not by vectors in $C_{\mathcal{L}}(D, kO)\setminus C_{\mathcal{L}}(D, (k-1)O)$. So $H(k-1)v^T=0.$ Hence, the syndrome equals
\[
H(k)v^T=\begin{cases}
(0,\cdots,0,\sum_{i=1}^{n}\alpha_i^{\frac k2}v_i,0,\cdots, 0)^T,&\mbox{if $k$ is even;}\\
(0,\cdots,0,\sum_{i=1}^{n}\alpha_i^{\frac {k-3}2}\beta_iv_i)^T,&\mbox{if $k$ is odd.}\\
\end{cases}
\]

For the case $P=(\alpha, \beta)\in E(\f{q})\setminus (D\cup \{O\})$, similarly, any vector $v\in C_{\Omega}(D, kO-P)\setminus C_{\Omega}(D, kO)$ is killed by vectors in $C_{\mathcal{L}}(D, kO-P)$ but not by vectors in $C_{\mathcal{L}}(D, kO)\setminus C_{\mathcal{L}}(D, kO-P)$. It is clear that 
$$\{x-\alpha, (x-\alpha)x, \cdots, (x-\alpha)x^{\lfloor\frac k2\rfloor}, y-\beta, x(y-\beta), \cdots, x^{\lfloor\frac {k-3}2\rfloor}(y-\beta)\}$$ forms a basis for the Riemann-Roch space $\mathcal{L}(kO-P)$. So we have{\small
$$\left(
\begin{array}{cccc}
\alpha_1-\alpha & \alpha_2-\alpha & \cdots & \alpha_n-\alpha  \\
(\alpha_1-\alpha)\alpha_1 & (\alpha_2-\alpha)\alpha_2 & \cdots & (\alpha_n-\alpha)\alpha_n  \\
\vdots & \vdots & \ddots & \vdots  \\
(\alpha_1-\alpha)\alpha_1^{\lfloor\frac k2\rfloor} & (\alpha_2-\alpha)\alpha_2^{\lfloor\frac k2\rfloor} & \cdots & (\alpha_n-\alpha)\alpha_n^{\lfloor\frac k2\rfloor} \\
\beta_1-\beta & \beta_2-\beta & \cdots & \beta_n-\beta  \\
\alpha_1(\beta_1-\beta) & \alpha_2(\beta_2-\beta) & \cdots & \alpha_n(\beta_n-\beta)  \\
\alpha_1^2(\beta_1-\beta) & \alpha_2^2(\beta_2-\beta) & \cdots & \alpha_n^2(\beta_n-\beta)  \\
\vdots & \vdots & \ddots & \vdots  \\
\alpha_1^{\lfloor\frac {k-3}2\rfloor}(\beta_1-\beta) & \alpha_2^{\lfloor\frac {k-3}2\rfloor}(\beta_2-\beta) & \cdots & \alpha_n^{\lfloor\frac {k-3}2\rfloor}(\beta_n-\beta)  \\
\end{array}
\right)v^T=0.$$}
Let $b=\sum_{i=1}^{n}v_i$. Since the vector $v$ can not be killed by vectors in $C_{\mathcal{L}}(D, kO)\setminus C_{\mathcal{L}}(D, kO-P)$, we have $b\neq 0.$ So we have{\small
$$\left(
\begin{array}{cccc}
1&1&\cdots&1\\
\alpha_1-\alpha & \alpha_2-\alpha & \cdots & \alpha_n-\alpha  \\
(\alpha_1-\alpha)\alpha_1 & (\alpha_2-\alpha)\alpha_2 & \cdots & (\alpha_n-\alpha)\alpha_n  \\
\vdots & \vdots & \ddots & \vdots  \\
(\alpha_1-\alpha)\alpha_1^{\lfloor\frac k2\rfloor} & (\alpha_2-\alpha)\alpha_2^{\lfloor\frac k2\rfloor} & \cdots & (\alpha_n-\alpha)\alpha_n^{\lfloor\frac k2\rfloor} \\
\beta_1-\beta & \beta_2-\beta & \cdots & \beta_n-\beta  \\
\alpha_1(\beta_1-\beta) & \alpha_2(\beta_2-\beta) & \cdots & \alpha_n(\beta_n-\beta)  \\
\alpha_1^2(\beta_1-\beta) & \alpha_2^2(\beta_2-\beta) & \cdots & \alpha_n^2(\beta_n-\beta)  \\
\vdots & \vdots & \ddots & \vdots  \\
\alpha_1^{\lfloor\frac {k-3}2\rfloor}(\beta_1-\beta) & \alpha_2^{\lfloor\frac {k-3}2\rfloor}(\beta_2-\beta) & \cdots & \alpha_n^{\lfloor\frac {k-3}2\rfloor}(\beta_n-\beta)  \\
\end{array}
\right)v^T=\left(\begin{array}{l}
b\\
0\\
\vdots\\
0
\end{array}
\right).$$}
By adding the second row by $\alpha$-times of the first row, then adding the third row by $\alpha$-times of the new second row, and so on, we obtain{\small
$$\left(
\begin{array}{cccc}
1&1&\cdots&1\\
\alpha_1 & \alpha_2 & \cdots & \alpha_n  \\
\vdots & \vdots & \ddots & \vdots  \\
\alpha_1^{\lfloor\frac k2\rfloor} & \alpha_2^{\lfloor\frac k2\rfloor} & \cdots & \alpha_n^{\lfloor\frac k2\rfloor} \\
\beta_1-\beta & \beta_2-\beta & \cdots & \beta_n-\beta  \\
\alpha_1(\beta_1-\beta) & \alpha_2(\beta_2-\beta) & \cdots & \alpha_n(\beta_n-\beta)  \\
\alpha_1^2(\beta_1-\beta) & \alpha_2^2(\beta_2-\beta) & \cdots & \alpha_n^2(\beta_n-\beta)  \\
\vdots & \vdots & \ddots & \vdots  \\
\alpha_1^{\lfloor\frac {k-3}2\rfloor}(\beta_1-\beta) & \alpha_2^{\lfloor\frac {k-3}2\rfloor}(\beta_2-\beta) & \cdots & \alpha_n^{\lfloor\frac {k-3}2\rfloor}(\beta_n-\beta)  \\
\end{array}
\right)v^T=\left(\begin{array}{l}
b\\
b\alpha\\
\vdots\\
b\alpha^{\lfloor\frac k2\rfloor}\\
0\\
\vdots\\
0
\end{array}
\right).$$}
Now, by adding $\beta$ times of the first $\lfloor\frac {k-3}2\rfloor +1$ rows to the lower part in the above equation, we have
\[
\left(
\begin{array}{cccc}
1 & 1 & \cdots & 1  \\
\alpha_1 & \alpha_2 & \cdots & \alpha_n  \\
\vdots & \vdots & \ddots & \vdots  \\
\alpha_1^{\lfloor\frac k2\rfloor} & \alpha_2^{\lfloor\frac k2\rfloor} & \cdots & \alpha_n^{\lfloor\frac k2\rfloor} \\
\beta_1 & \beta_2 & \cdots & \beta_n  \\
\alpha_1\beta_1 & \alpha_2\beta_2 & \cdots & \alpha_n\beta_n  \\
\vdots & \vdots & \ddots & \vdots  \\
\alpha_1^{\lfloor\frac {k-3}2\rfloor}\beta_1 & \alpha_2^{\lfloor\frac {k-3}2\rfloor}\beta_2 & \cdots & \alpha_n^{\lfloor\frac {k-3}2\rfloor}\beta_n \\
\end{array}
\right)v^T=\left(\begin{array}{l}
b\\
b\alpha\\
\vdots\\
b\alpha^{\lfloor\frac k2\rfloor}\\
b\beta\\
b\beta \alpha\\
\vdots\\
b\beta \alpha^{\lfloor\frac {k-3}2\rfloor}
\end{array}
\right).
\]
That is, 
\[
H(k)v^T=b(1,\alpha, \cdots, \alpha^{\lfloor\frac k2\rfloor}, \beta, \beta\alpha,\cdots, \beta\alpha^{\lfloor\frac {k-3}2\rfloor})^T.
\]
\end{proof}

In the above theorem, one checks that 
$$|C_{\Omega}(D, kO-P)\setminus C| = q^{n-k+1} - q^{n-k} = (q-1)q^{n-k}.$$
If $k\geq 3$, the syndrome formula implies that the union 
$$ \bigcup_{P\in E(\f{q})\setminus D} C_{\Omega}(D, kO-P)\setminus C$$
is a disjoint uinion. This disjoint union gives $(|E(\f{q})|-n)(q-1)q^{n-k}$ deep holes of $C$. 
This proves the analogue of Theorem \ref{thm:main} for the residue elliptic code $C_{\Omega}(D, kO)$. 

%To Theorem~\ref{thm:main} for the  functional elliptic code $C_{\mathcal{L}}(D, kO)$, one can first represent $C_{\mathcal{L}}(D, kO)$ as a residue elliptic curve code and then use the above result. Here is the sketch. By~\cite[Lemma~2.2.9]{Stichtenoth}, there exists a Weil differential $\eta$ such that $v_{P}(\eta)=-1$ and $\eta_{P}(1)=1$ for all $P\in D$. Indeed, for any given elliptic curve, one can explicitly compute such an Weil differential $\eta$. By~\cite[Proposition~2.2.10]{Stichtenoth}, we have 
%\[
%C_{\mathcal{L}}(D, kO)=C_{\Omega}(D, D-kO+(\eta))
%\]
%where $(\eta)$ is the divisor of the Weil differential $\eta$. With slight modification on Theorem~\ref{thm:deephole&syn}, we can write down a similar result for $C_{\Omega}(D, D-kO+(\eta))$.

%Alternatively, one can give a direct proof for Theorem~\ref{thm:main} by using function field language. The advantage of using residue elliptic codes is that there is an easy parity-check matrix for computing the syndromes which can help us to understand the geometry of the deep holes, as discussed in next section. 

 \begin{rem} 
  It is natural to ask if the deep holes given in the above theorem form all the deep holes. 
  The answer is expected to be yes for long elliptic curve codes, but it is going to be difficult to prove. In next section, we use finite geometry to discuss the completeness of deep holes in the above theorem and use the automorphism technique to obtain more ones if there is any new deep hole. 
\end{rem}

\section{On the completeness of deep holes}\label{Sec:completenessofdeepholes}
\subsection{Automorphisms of elliptic curves and deep holes of elliptic curve codes}

In this subsection, we use a Hamming distance-preserving subgroup of automorphism group of the linear code $C$ to 
construct more deep holes if there is any new deep hole. This idea was used in~\cite{ZWK19}. 

Let $\aut(E/\f{q})$ be the $\f{q}$-automorphism group of $E$ as an elliptic curve and  let $\aut_{D,G}(E/\f{q})$ be its subgroup fixing $D$ and $G$, respectively. That is, 
$$\aut_{D,G}(E/\f{q})=\{\sigma\in\aut(E/\f{q})\,|\,\sigma(D)=D,\,\sigma(G)=G\}.$$For any linear code $C$, denote by $\paut(C)$ the permutation automorphism group of $C$ whose elements are not only permutations of coordinates but also automorphisms of $C$.

\begin{lem}
	Let $C=C_{\Omega}(D,G)$ or $C=C_{\mathcal{L}}(D,G)$ be the algebraic geometry code constructed from the elliptic curve $E$. There is a homomorphism $\rho\,:\, \aut_{D,G}(E/\f{q})\rightarrow \paut(C)$.
\end{lem}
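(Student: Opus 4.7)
The plan is to build $\rho$ from the pullback action of automorphisms on function fields and differentials. Given $\sigma\in\aut_{D,G}(E/\f{q})$, the condition $\sigma(D)=D$ means $\sigma$ permutes the points $P_1,\dots,P_n$, so there is a well-defined permutation $\pi_\sigma\in S_n$ characterized by $\sigma(P_i)=P_{\pi_\sigma(i)}$. Define $\rho(\sigma)=\pi_\sigma$ and let $\pi_\sigma$ act on $\f{q}^n$ by permuting coordinates via $(\pi_\sigma\cdot v)_i=v_{\pi_\sigma^{-1}(i)}$. Checking that $\sigma\mapsto \pi_\sigma$ is a group homomorphism into $S_n$ is routine: from $(\sigma\tau)(P_i)=\sigma(P_{\pi_\tau(i)})=P_{\pi_\sigma(\pi_\tau(i))}$ we read off $\pi_{\sigma\tau}=\pi_\sigma\circ\pi_\tau$.

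The real content is showing the image lies in $\paut(C)$, i.e.\ that $\pi_\sigma$ stabilizes the code. First I would handle the functional case $C=C_{\mathcal{L}}(D,G)$. Since $\sigma$ is an $\f{q}$-automorphism of $E$, pullback of functions induces a $\f{q}$-linear isomorphism $\sigma^*\colon \f{q}(E)\to\f{q}(E)$, and because $\mathrm{div}(\sigma^*f)=\sigma^{-1}(\mathrm{div}(f))$ together with $\sigma(G)=G$ (equivalently $\sigma^{-1}(G)=G$), the space $\mathcal{L}(G)$ is preserved by $\sigma^*$ and by $(\sigma^{-1})^*$. A direct unfolding of definitions gives
\[
(\pi_\sigma\cdot \mathrm{ev}(f))_i \;=\; f(P_{\pi_\sigma^{-1}(i)}) \;=\; f(\sigma^{-1}(P_i)) \;=\; \mathrm{ev}\bigl((\sigma^{-1})^*f\bigr)_i,
\]
so $\pi_\sigma\cdot\mathrm{ev}(f)=\mathrm{ev}((\sigma^{-1})^*f)\in C_{\mathcal{L}}(D,G)$, showing $\pi_\sigma$ is a permutation automorphism of $C_{\mathcal{L}}(D,G)$.

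For the residue case $C=C_\Omega(D,G)$ I would argue analogously using the pullback on Weil differentials. Because $\sigma$ is an isomorphism, the transformation rule for residues reads $\mathrm{res}_{P}(\sigma^*\omega)=\mathrm{res}_{\sigma(P)}(\omega)$, and $\sigma^*$ preserves $\Omega(G-D)$ since $\sigma$ fixes both $G$ and $D$ as divisors. The same coordinate-permutation computation then gives $\pi_\sigma\cdot\mathrm{res}(\omega)=\mathrm{res}((\sigma^{-1})^*\omega)\in C_\Omega(D,G)$.

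Finally, having established that each $\pi_\sigma$ lies in $\paut(C)$, the homomorphism property $\rho(\sigma\tau)=\rho(\sigma)\rho(\tau)$ follows from $\pi_{\sigma\tau}=\pi_\sigma\circ\pi_\tau$ verified above, and $\rho(\mathrm{id})=\mathrm{id}$ is immediate. I do not anticipate a serious obstacle here; the main things requiring care are the direction conventions for the permutation action (so that the pullback by $\sigma$ or by $\sigma^{-1}$ corresponds to the right side of the coordinate action) and the observation that fixing $G$ setwise as a divisor is what is needed for $\sigma^*$ to preserve $\mathcal{L}(G)$ and $\Omega(G-D)$, both of which are pure bookkeeping rather than substantive difficulties.
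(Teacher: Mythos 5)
Your proposal is correct and follows essentially the same route as the paper: pull back functions along $\sigma$, use $\sigma(G)=G$ to see that $\mathcal{L}(G)$ is preserved, and observe that $\sigma(D)=D$ turns the induced map on codewords into a coordinate permutation. If anything you are slightly more thorough than the paper, which only writes out the functional case and does not explicitly verify the homomorphism property; your explicit treatment of the residue case via $\mathrm{res}_{P}(\sigma^*\omega)=\mathrm{res}_{\sigma(P)}(\omega)$ and of the composition law $\pi_{\sigma\tau}=\pi_\sigma\circ\pi_\tau$ fills in exactly what the paper leaves to the reader.
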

\begin{proof}
	We only prove the statement for $C=C_{\mathcal{L}}(D,G)$. The proof for the case $C=C_{\Omega}(D,G)$ is the same. Let $\aut_{D,G}(\f{q}(E)/\f{q})$ be the $\f{q}$-automorphism group of the elliptic function field $\f{q}(E)$ whose elements fix $D$ and $G$. There is a homomorphism 
	\[
	\rho_1\,:\,\aut_{D,G}(E/\f{q})\rightarrow \aut_{D,G}(\f{q}(E)/\f{q})
	\]
	defined by as follows: for any $T\in \aut_{D,G}(E/\f{q})$, and $f\in \f{q}(E)$, $\rho_1(T)(f)=T^*(f)$ is the pull-back of $f$ which defined by $T^*(f)(P)=f(T^{-1}(P))$ for any $P\in E(\f{q})$.
	
	Next, we show that for any $T\in \aut_{D,G}(E/\f{q})$, $\rho_1(T)\in \aut(\mathcal{L}(G))$. For any $f\in\mathcal{L}(G)$, we have $\mathrm{div}(f)+G\geq 0.$ So $T^{-1}(\mathrm{div}(f))+T^{-1}(G)\geq 0.$ Since $T^{-1}(G)=G$ and $T^{-1}(\mathrm{div}(f))=\mathrm{div}(T^*(f))$, we have $\mathrm{div}(T^*(f))+G\geq 0$. That is, $T^*(f)\in \mathcal{L}(G).$ 
	
	Since the map $\mathrm{ev}$ is an isomorphism from $\mathcal{L}(G)$ to $C$, we define $\rho=\mathrm{ev}\circ \rho_1\circ \mathrm{ev}^{-1}\,:\,C\rightarrow C$. It is obvious that $\rho$ is an automorphism of $C$.
	To finish the proof, we need to show that for any $T\in \aut_{D,G}(E/\f{q})$, $\rho(T)$ is a permutation of coordinates. Indeed, for any $f\in\mathcal{L}(G)$, we have
	\begin{align*}
	  &\rho(T)(f(P_1), f(P_2), \cdots, f(P_n))=\mathrm{ev}\circ \rho_1(T)\circ \mathrm{ev}^{-1}(f(P_1), f(P_2), \cdots, f(P_n))\\
	  =&\mathrm{ev}( \rho_1(T)(f))=\mathrm{ev}(T^*(f))=(T^*(f)(P_1), T^*(f)(P_2), \cdots, T^*(f)(P_n))\\
	  =&(f(T^{-1}(P_1)), f(T^{-1}(P_2)),\cdots, f(T^{-1}(P_n))).
	\end{align*}
	
\end{proof}	

Note that for any $T\in \aut_{D,G}(E/\f{q})$, as a permutation of coordinates, the map $\rho(T)$ can be extended to the whole space $\f{q}^n$ which is still denoted by $\rho(T)$.
\begin{prop}\label{thm:automorphism&deephole}
	Let $C=C_{\Omega}(D,G)$ or $C=C_{\mathcal{L}}(D,G)$ be the algebraic geometry code constructed from the elliptic curve $E$. If the word $v$ is a deep hole of $C$, then so is the word $\rho(T)(v)$ for any $T\in \aut_{D,G}(E/\f{q})$.
\end{prop}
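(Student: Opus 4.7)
The plan is to reduce the statement to the basic fact that coordinate permutations preserving $C$ are Hamming isometries that also stabilize $C$, so they preserve the error-distance function $d(\cdot, C)$; since deep holes are precisely the words realizing its maximum value, they are permuted among themselves.

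First I would invoke the previous lemma to view $\rho(T)$ as an element of $\paut(C)$, which, by the remark immediately preceding the proposition, extends to a permutation of $\f{q}^n$. Since any permutation of coordinates preserves Hamming weight of differences, we have the isometry identity
\[
d(u, w) = d(\rho(T)(u), \rho(T)(w)) \quad \text{for all } u, w \in \f{q}^n.
\]

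Next I would use that $\rho(T)$ restricts to a bijection $C \to C$. Substituting $w = \rho(T)(c')$ for $c' \in C$ and taking the minimum,
\[
d(\rho(T)(v), C) = \min_{c \in C} d(\rho(T)(v), c) = \min_{c' \in C} d(\rho(T)(v), \rho(T)(c')) = \min_{c' \in C} d(v, c') = d(v, C).
\]
If $v$ is a deep hole then $d(v, C) = \rho(C)$, so $d(\rho(T)(v), C) = \rho(C)$ as well, and $\rho(T)(v)$ is also a deep hole.

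There is no genuine obstacle: once the homomorphism $\rho$ of the previous lemma is in hand, the invariance of the error distance under any coordinate permutation in $\paut(C)$ is immediate. The only item worth double-checking is that the extension of $\rho(T)$ from an automorphism of $C$ to a permutation of the ambient space $\f{q}^n$ is well-defined, which is exactly what was granted in the remark just before the proposition.
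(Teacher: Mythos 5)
Your argument is correct and is essentially identical to the paper's proof: both use that $\rho(T)$ is a Hamming isometry of $\f{q}^n$ that stabilizes $C$, so $d(\rho(T)(v),C)=d(v,C)$ and deep holes are mapped to deep holes. No further comment is needed.
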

\begin{proof}
	We have seen that the map $\rho(T)\,:\,\f{q}^n\rightarrow \f{q}^n$ preserves the Hamming distance and is an automorphism of $C$ if restricted to the linear subspace $C$. So $$d(\rho(T)(v), C)=d(v, \rho(T)^{-1}C)=d(v, C).$$ Hence, the word $\rho(T)(v)$ is a deep hole of $C$.
\end{proof}	

\begin{rem}
	Since $T(P)\in E(\f{q})\setminus D$ for any $T\in \aut_{D,G}(E/\f{q})$ and $P\in E(\f{q})\setminus D$, the deep holes found in Theorem~\ref{thm:deephole&syn} are invariant under the action in Proposition~\ref{thm:automorphism&deephole}. 
	
	If any new deep hole except those in Theorem~\ref{thm:deephole&syn} was found, then its orbit under the action of $\rho(\aut_{D,G}(E/\f{q}))$ would provide new ones. So it is interesting to find new deep hole not of the form in Theorem~\ref{thm:deephole&syn}. We will see this is already very hard for small $n-k$ in the next subsection.
\end{rem}

\subsection{Deep holes of elliptic curve codes and finite geometry}\label{Sec:geometryofdeepholes}
%In [Kaipa], it is shown that the deep hole problem of RS codes is equivalent to MDS extension of the RS code. Serrousi and Roth proved that....... 
In this subsection, we discuss the geometry of deep holes of elliptic curve codes.

\begin{definition}
	An $n$-track in $PG(k-1,\f{q})$, the projective $k-1$-dimensional space over the finite field $\f{q}$, is a set $\mathcal{T}$ of $n$ points which satisfies the following two conditions: 
	\begin{enumerate}
		\item[(i)] Any $k-1$ points in $\mathcal{T}$ are linearly independent as vectors in $\f{q}^k$;
		\item[(ii)] There exists a hyperplane passing through some $k$ points in $\mathcal{T}$.
	\end{enumerate} 
\end{definition}

\begin{definition}
	An $(n;k)$-set $\mathcal{A}$ in $PG(k-1,\f{q})$ is an $n$-track with an extra condition: any $k+1$ points of $\mathcal{A}$ can linearly generate $PG(k-1,\f{q})$.
\end{definition}

The following proposition give the structure of long tracks.

\begin{prop}\cite[Theorem~3.4]{DL94}\label{longtracks}
	If $n>q+k$, then any $n$-track in $PG(k-1,\f{q})$ is an $(n;k)$-set. 
\end{prop}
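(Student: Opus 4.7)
The plan is to argue by contradiction. Suppose $\mathcal{T}$ is an $n$-track in $PG(k-1,\f{q})$ with $n>q+k$ and that some hyperplane $H_0$ contains $k+1$ points $P_1,\ldots,P_{k+1}$ of $\mathcal{T}$. The goal is to deduce $n\leq q+k$.

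First, I would exploit that $\mathcal{T}\cap H_0$ inherits strong structure from condition (i): any $k-1$ of the points in $\mathcal{T}\cap H_0$ are linearly independent as vectors in $\f{q}^k$ and thus span $H_0$. Identifying $H_0$ with $PG(k-2,\f{q})$, the set $\mathcal{T}\cap H_0$ is therefore an $(n_0,k-1)$-arc in $PG(k-2,\f{q})$ with $n_0=|\mathcal{T}\cap H_0|\geq k+1$, so the classical bound for arcs gives $n_0\leq q+1$ in the relevant range of $k$. Hence there remain at least $n-(q+1)>k-1$ track points outside $H_0$.

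Next, select $k-2$ points $Q_1,\ldots,Q_{k-2}$ from among the $P_i$'s and consider the projection $\pi$ from the $(k-3)$-dimensional subspace $U=\langle Q_1,\ldots,Q_{k-2}\rangle$ onto $PG(1,\f{q})$. Condition (i) ensures $\mathcal{T}\cap U=\{Q_1,\ldots,Q_{k-2}\}$, so $\pi$ is defined on all remaining track points. The $q+1$ fibers of $\pi$ correspond bijectively to the hyperplanes of $PG(k-1,\f{q})$ that contain $U$; one of them is $H_0$, corresponding to a point $p_0\in PG(1,\f{q})$, and the three ``extra'' track points $P_{k-1},P_k,P_{k+1}\in H_0\setminus U$ all project to $p_0$. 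Hence $|\pi^{-1}(p_0)\cap\mathcal{T}|\geq 3$, while the other $q$ fibers carry the remaining $n-(k-2)-(\geq\!3)$ track points in total.

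The crux is then to bound each of the other $q$ fibers: a fiber over some $p\neq p_0$ with $\geq 2$ track points would correspond to a hyperplane $H_p\supset U$ containing $\geq k$ track points, producing a second arc containing $U$ inside $H_p$. Varying $U$ over the $\binom{k+1}{k-2}$ possible choices of $(k-2)$-subsets of $\{P_1,\ldots,P_{k+1}\}$ and applying the arc bound to every such restriction yields overlapping constraints between the hyperplanes $H_p$ and the original $H_0$. Combined with the naive fiber count, one gets $n-(k-2)\leq 3+q$, i.e., $n\leq q+k+1$, and sharpening by one via the refined constraints produces $n\leq q+k$, contradicting the hypothesis. The main obstacle is precisely this last bookkeeping step: showing that no second fiber can contain two or more track points. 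The cleanest route is induction on $k$, applying the statement in dimension $k-1$ to the arc $\mathcal{T}\cap H_0$ to sharpen its size bound, and then closing the argument with a tight count of the fibers of $\pi$.
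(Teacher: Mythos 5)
The paper supplies no proof of this proposition --- it is quoted from [DL94, Theorem~3.4] --- so your attempt can only be measured against the argument that actually appears there. Your combinatorial skeleton (project from a codimension-$2$ subspace $U$ spanned by $k-2$ of the points and count the $q+1$ hyperplanes of the pencil through $U$) is indeed the right device, and it is essentially the one in [DL94]. But the step you yourself flag as ``the main obstacle'' is a genuine, unfixable gap: you need every fiber over $p\neq p_0$, i.e.\ every hyperplane $H\supset U$ other than $H_0$, to contain at most one track point outside $U$. Condition (i) of the track definition (every $k-1$ points linearly independent) gives no control over how many points a hyperplane may contain: a hyperplane is a $(k-1)$-dimensional vector space, and an arc of size up to $q+k-2$ inside it satisfies (i) perfectly well. (Two further problems: the bound $n_0\leq q+1$ for the arc $\mathcal{T}\cap H_0$ is not available without MDS-conjecture-type hypotheses on $q$ and $k$, which the proposition does not assume; and in your final count you insert the \emph{lower} bound $3$ for the $H_0$-fiber where an \emph{upper} bound on every fiber is what the inequality requires.) Neither the proposed induction on $k$ nor the ``sharpening by one'' is carried out, and neither can be.

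The reason no bookkeeping will close this gap is that the statement, with the definitions as printed here, is false. For $k=4$ take $\mathcal{T}$ to be an elliptic quadric (ovoid) in $PG(3,\f{q})$: it has $q^{2}+1>q+4$ points with no three collinear, so condition (i) holds; some four of its points are coplanar, so condition (ii) holds; yet its secant planes contain $q+1\geq 5$ points, so the $(n;4)$-set condition fails. (For $k=3$, a full line together with three further points already does the job.) What [DL94, Theorem~3.4] proves is the statement with the two side conditions interchanged: if every $k+1$ points of an $n$-set in $PG(k-1,\f{q})$ span the whole space (equivalently, no hyperplane contains more than $k$ of the points) and $n>q+k$, then every $k-1$ points are linearly independent. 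Under that hypothesis your pencil count closes in one line: if some $k-1$ points were dependent, they would lie in a subspace $W$ of projective codimension $2$; each of the $q+1$ hyperplanes through $W$ contains at most $k$ points of the set and hence at most one point outside $W$, while every point outside $W$ lies in exactly one such hyperplane, so $n\leq (k-1)+(q+1)=q+k$. The fix is therefore not to your counting but to the definitions: the roles of ``any $k-1$ independent'' and ``any $k+1$ span'' must be swapped between ``track'' and ``$(n;k)$-set'' for the proposition, and for its later use in Corollary~\ref{thm:completecor}, to be the result of [DL94].
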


\begin{definition}
	An $(n;k)$-set $\mathcal{A}$ in $PG(k-1,\f{q})$ is called complete if there is no $(n+1;k)$-set in $PG(k-1,\f{q})$ containing $\mathcal{A}$ as a subset.
\end{definition}

\begin{definition}
	An $(n;k)$-set $\mathcal{A}$ in $PG(k-1,\f{q})$ is called extendable if there is some $n+1$-track in $PG(k-1,\f{q})$ containing $\mathcal{A}$ as a subset. Otherwise, we call it non-extendable.
\end{definition}

Note that when $n>q+k$, all $n$-tracks in $PG(k-1,\f{q})$ are $(n;k)$-set. 
So in this case, ``complete" and ``non-extendable" are the same thing.

An important class of long $(n;k)$-sets is constructed from elliptic curves, i.e., columns of $H(k)$. We can rearrange the rows of $H(k)$ such that $2i+3j$ of the corresponding row defined by $x^iy^j$ is in the increasing order. Define the map
\begin{align*}
\phi_k\,:\,&E(\f{q})\setminus \{O\}\rightarrow PG(k-1,\f{q})\\
         &(x,y)\mapsto \begin{cases}
         (1,x,y,x^2,xy,x^3,\cdots,x^{\frac{k}{2}-2}y,x^{\frac{k}{2}})^T&\mbox{if $k$ is even;}\\
         (1,x,y,x^2,xy,x^3,\cdots,x^{\frac{k-1}{2}}, x^{\frac{k-3}{2}}y)^T&\mbox{if $k$ is odd,}\\
         \end{cases}\notag
\end{align*}
and $\phi_k(O)=(0,0,\cdots,0,1)^T\in PG(k-1,\f{q})$.
If $n=|D|\geq q+1$, then the code $C_\Omega(D\cup\{P\}, kO)$ has minimum distance $k$ for any rational point $P\in E(\f{q})\setminus D$ by MDS conjecture (for a proof of MDS conjecture for elliptic curve codes, we refer to \cite{Wal96}). Note that the matrix $[\phi_k(P_1), \cdots, \phi_k(P_n), \phi_k(P)]$ is a parity-check matrix of the code $C_\Omega(D\cup\{P\}, kO)$, so the vectors $\phi_k(P_1), \phi_k(P_2),\cdots, \phi_k(P_n)$ and $\phi_k(P)$ form an $(n+1; k)$-set for any rational point $P\in E(\f{q})\setminus D$.
\begin{prop}\label{thm:geometryofdeephole}
	Suppose the residue elliptic curve code $C=C_{\Omega}(D, kO)$ has covering radius $\rho=k-1$. Let $H=(h_1,h_2,\cdots, h_n)\in\f{q}^{k\times n}$ be a parity-check matrix of $C$. The vector $v$ is a deep hole of $C$ if and only if vectors $h_1,h_2,\cdots, h_n$ and $Hv^T$ form an $n+1$-track in $PG(k-1,\f{q})$.
\end{prop}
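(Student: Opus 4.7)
The strategy is to reduce the deep-hole condition to a rank statement about the columns of $H$ together with the syndrome $Hv^T$, using the standard identity
\[
d(v,C) \;=\; \min\bigl\{\,|S|\ :\ Hv^T \in \mathrm{span}\{h_i : i\in S\}\,\bigr\},
\]
together with the a priori bound $d(C)\geq k$ for $C=C_{\Omega}(D,kO)$ from the AG-code estimate, which tells us that any $k-1$ of the columns $h_1,\ldots,h_n$ are linearly independent in $\f{q}^k$. With these two inputs the proposition becomes a clean back-and-forth between ``minimum-weight coset representative'' and ``linear (in)dependence of the extended column set''.

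For the forward direction, I assume $v$ is a deep hole, so $d(v,C)=k-1$, and pick a weight-$(k-1)$ representative $e$ of the coset $v+C$. Then $Hv^T = He^T$ is a nontrivial linear combination of the $k-1$ columns indexed by $\mathrm{supp}(e)$, which establishes condition~(ii) directly: together with $Hv^T$, these $k-1$ columns form a set of $k$ linearly dependent points. For condition~(i), I argue by contradiction: suppose some $k-1$ of the $n+1$ points were linearly dependent. Either (Case~1) $Hv^T$ is not among them, giving $k-1$ dependent columns of $H$ and so $d(C)\leq k-1$, contradicting $d(C)\geq k$; or (Case~2) $Hv^T$ is among them alongside some $h_{j_1},\ldots,h_{j_{k-2}}$, in which case those $k-2$ columns are themselves linearly independent (by $d(C)\geq k$), forcing the dependence to have a nonzero coefficient on $Hv^T$, and hence $Hv^T$ lies in the span of those $k-2$ columns, yielding $d(v,C)\leq k-2$ and contradicting the deep-hole hypothesis.

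For the reverse direction, suppose the $n+1$ points form an $n+1$-track. If $d(v,C)=m\leq k-2$, pick columns $h_{i_1},\ldots,h_{i_m}$ whose span contains $Hv^T$; then $\{h_{i_1},\ldots,h_{i_m},Hv^T\}$ is linearly dependent, and by adjoining any $k-m-2$ further columns $h_j$ I obtain a linearly dependent $(k-1)$-subset of the $n+1$ points, contradicting condition~(i). Hence $d(v,C)\geq k-1$, and the covering-radius hypothesis $\rho(C)=k-1$ gives the reverse inequality $d(v,C)\leq k-1$, so $v$ is a deep hole.

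The main subtlety is the case split in the forward direction for condition~(i), where one must juggle the position of $Hv^T$ within the allegedly dependent subset and invoke $d(C)\geq k$ in each sub-case; everything else is a routine translation of syndrome decoding into projective geometry. Note that condition~(ii) is actually redundant in the reverse direction (condition~(i) together with the covering-radius bound already forces the deep-hole property), but it is automatic whenever $v$ is a deep hole, so the biconditional with an $n+1$-track remains tight.
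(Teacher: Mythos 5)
Your proof is correct and follows essentially the same route as the paper's: translate $d(v,C)$ into the minimum number of columns of $H$ whose span contains the syndrome $Hv^T$, use $d(C)\geq k$ to get independence of any $k-1$ columns, and use the covering-radius hypothesis for the upper bound in the converse. The only difference is that you spell out the two-case argument for condition (i) in the forward direction (and observe the redundancy of condition (ii) in the converse), which the paper leaves implicit.
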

\begin{proof}
	First, since the code $C=C_{\Omega}(D, kO)$ has minimum distance $\geq k$, any $k-1$ columns of the parity-check matrix $H$ are linearly independent.
	
	Secondly, since $C$ has covering radius $\rho=k-1$, we have that the vector $v$ is a deep hole of $C$ if and only the syndrome $Hv^T$ can not be written as any linear combination of any $\leq k-2$ columns of $H$.
	
	So if the vector $v$ is a deep hole of $C$, then 
	\begin{itemize}
		\item[(i)] any $k-1$ vectors from $\{h_1,h_2,\cdots, h_n, Hv^T\}$ are linearly independent;
		\item[(ii)]  there exists a hyperplane passing through $Hv^T$ and some $k-1$ vectors from $\{h_1,h_2,\cdots, h_n\}$, since $d(v,C)=k-1$.
	\end{itemize}
	Hence $\{h_1,h_2,\cdots, h_n, Hv^T\}$ forms an $n+1$-track in $PG(k-1,\f{q})$. 
	
	Conversely, if  $\{h_1,h_2,\cdots, h_n, Hv^T\}$ is an $n+1$-track in $PG(k-1,\f{q})$, then any $k-1$ vectors from $\{h_1,h_2,\cdots, h_n, Hv^T\}$ are linearly independent. So the syndrome $Hv^T$ can not be written as any linear combination of any $\leq k-2$ columns of $H$. Hence 
	the vector $v$ is a deep hole of $C$.

\end{proof}	

The above proposition characterizes the geometry of deep holes of the residue elliptic curve code $C=C_{\Omega}(D, kO)$, equivalently the extendability of the $(n;k)$-set $\{\phi_k(P_1), \phi_k(P_2),\cdots, \phi_k(P_n)\}$. We have seen that for large $n\geq q+1$ and for any rational point $P\in E(\f{q})\setminus D$ the vectors $\phi_k(P_1), \phi_k(P_2),\cdots, \phi_k(P_n)$ and $\phi_k(P)$ form an $(n+1; k)$-set. As a consequence, we can re-obtain the deep holes of $C=C_{\Omega}(D, kO)$ in Theorem~\ref{thm:deephole&syn}.

Here raises the interesting problem: Is the $(|E(\f{q})|;k)$-set $\varepsilon=\{\phi_k(P)\mid P\in E(\f{q})\}$ extendable? If not, does there exist an integer $\mathcal{N}_0$ such that any track intersecting with $\varepsilon$ at $\mathcal{N}_0$ or more points must be a part of $\varepsilon$?

%\begin{enumerate}
%	\item Is the arc $\varepsilon=\{\phi_k(P)\mid P\in E(\f{q})\}$ complete? If so, does there exist an integer $N_0$ such that any arc intersecting with $\varepsilon$ at $N_0$ or more points must be a part of $\varepsilon$?
%	\item 
%\end{enumerate}

%Note that the above two problems are the same one for long tracks (e.g. for $n>q+k$ by Proposition~\ref{longtracks}). 

\begin{rem}
	The authors~\cite{AGS21,Giu04} studied the extendibility of the $(|E(\f{q})|;k)$-set $\varepsilon=\{\phi_k(P)\mid P\in E(\f{q})\}$.
	The problem for small $k$ is already very difficult, e.g. see~\cite{Giu04} for $k=5$.
	Let $q\geq 121$ be an odd prime power. Let $E$ be an elliptic curve over the finite field $\f{q}$ with non-zero $j$-invariant. Then for $k=3,4,6$ the $(|E(\f{q})|;k)$-set $(\phi_k(P))_{P\in E(\f{q})}$ is non-extendable. And it is conjectured in~\cite{AGS21} that for $k=9$ the $(|E(\f{q})|;9)$-set $(\phi_9(P))_{P\in E(\f{q})}$ is complete.
	
\end{rem}

\begin{cor}\label{thm:completecor}
	Suppose the $(|E(\f{q})|;k)$-set $\varepsilon=\{\phi_k(P)\mid P\in E(\f{q})\}$ is non-extendable and $\mathcal{N}_0$ is the smallest integer such that any track intersecting with $\varepsilon$ at $\mathcal{N}_0$ or more points must be a part of $\varepsilon$. Let $D\subset E(\f{q})\setminus \{O\}$ with cardinality $n=|D|\geq \mathcal{N}_0$. %Let $C=C_{\Omega}(D, kO)$ be the residue elliptic curve code. 
	If the residue elliptic curve code $C=C_{\Omega}(D, kO)$ is near-MDS and has covering radius $\rho=k-1$, then the following words 
	\[
	\bigcup_{P\in E(\f{q})\setminus D} C_{\Omega}(D, kO-P)\setminus C
	\]
	form all the deep holes of $C$.
\end{cor}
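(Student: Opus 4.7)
The plan is to combine the geometric characterization of deep holes in Proposition~\ref{thm:geometryofdeephole} with the defining property of $\mathcal{N}_0$ and the syndrome formulas of Theorem~\ref{thm:deephole&syn}. The inclusion of $\bigcup_{P\in E(\f{q})\setminus D}(C_{\Omega}(D,kO-P)\setminus C)$ in the set of deep holes of $C$ is already Theorem~\ref{thm:deephole&syn}(1), so the content is the reverse inclusion: every deep hole of $C$ must lie in some $C_{\Omega}(D,kO-P)\setminus C$ with $P\in E(\f{q})\setminus D$.

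Concretely, I would fix the parity-check matrix $H$ whose $i$-th column is $h_i=\phi_k(P_i)$ for the points $P_1,\ldots,P_n$ of $D$ (this is just the reordering of $H(k)$ that lines up with $\phi_k$). Let $v$ be any deep hole of $C$. Since by hypothesis $\rho(C)=k-1$, Proposition~\ref{thm:geometryofdeephole} says that $\{\phi_k(P_1),\ldots,\phi_k(P_n),Hv^T\}$ is an $(n+1)$-track in $PG(k-1,\f{q})$. This track meets $\varepsilon$ in at least the $n$ points $\phi_k(P_i)$, and $n\geq \mathcal{N}_0$, so the defining property of $\mathcal{N}_0$ forces the whole track to be contained in $\varepsilon$. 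In particular $Hv^T\in \varepsilon$, so there exist $c\in \f{q}^*$ and $P\in E(\f{q})$ with $Hv^T=c\,\phi_k(P)$ as vectors in $\f{q}^k$. The distinctness of the $n+1$ track points forces $P\notin D$, so $P\in E(\f{q})\setminus D$.

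It remains to convert this syndrome information into membership in the correct coset. By Theorem~\ref{thm:deephole&syn}(2)--(3), the syndromes of the words of $C_{\Omega}(D,kO-P)\setminus C$ fill out exactly the punctured line $\{b\,\phi_k(P):b\in \f{q}^*\}$, which contains $Hv^T=c\,\phi_k(P)$. Pick any $v'\in C_{\Omega}(D,kO-P)\setminus C$ with $Hv'^T=Hv^T$; then $v-v'\in C$. Since $kO-P\leq kO$ as divisors, $C=C_{\Omega}(D,kO)\subseteq C_{\Omega}(D,kO-P)$, hence $v=v'+(v-v')\in C_{\Omega}(D,kO-P)$, and $v\notin C$ because a deep hole has nonzero syndrome. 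This places $v\in C_{\Omega}(D,kO-P)\setminus C$ and finishes the reverse inclusion.

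There is no truly deep obstacle in this argument, since the real geometric content is packaged into the hypothesis on $\mathcal{N}_0$; the one technical point to watch is that the $(n+1)$-track in Proposition~\ref{thm:geometryofdeephole} really has $n+1$ distinct projective points, i.e.\ that $Hv^T$ is projectively distinct from every column $\phi_k(P_i)$. This is where $k\geq 3$ is used: if $Hv^T=a h_i$ for some $a\in \f{q}^*$, then $v-a e_i\in C$ would give $d(v,C)\leq 1<k-1=\rho(C)$, a contradiction. Beyond this, and the routine verification that $\phi_k$ is injective on $E(\f{q})$ for $k\geq 3$ (so that the $\phi_k(P_i)$ themselves are pairwise distinct), every step reduces to matching syndromes against the explicit list computed in Theorem~\ref{thm:deephole&syn}.
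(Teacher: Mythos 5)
Your proposal is correct and follows essentially the same route as the paper's proof: apply Proposition~\ref{thm:geometryofdeephole} to get an $(n+1)$-track, invoke the $\mathcal{N}_0$ hypothesis to force $Hv^T\in\varepsilon$, and then match syndromes against Theorem~\ref{thm:deephole&syn} to place $v$ in the coset $C_{\Omega}(D,kO-P)\setminus C$. Your treatment is in fact slightly more careful than the paper's on two small points (the projective scalar in $Hv^T=c\,\phi_k(P)$ and the distinctness of the $n+1$ track points), but these are refinements of the same argument, not a different one.
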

\begin{proof}
	First, for any $P\in E(\f{q})\setminus D$ and $v\in  C_{\Omega}(D, kO-P)\setminus C$, we have 
	\[
	k-1=\rho\geq d(v, C)\geq \min(d(C), d(C_{\Omega}(D, kO-P)))\geq k-1.
	\]
	So $d(v,C)=k-1$. That is, the word $v$ is a deep hole of $C$. 
	
	Next, we show the completeness. The code $C$ has a parity-check matrix $H=[\phi_k(P)]_{P\in D}$. Let $v$ be a deep hole of $C$. By Proposition~\ref{thm:geometryofdeephole}, $\phi_k(P), P\in D$ and $Hv^T$ form an $n+1$-track in $PG(k-1, \f{q})$. Since the $n+1$-track $\{\phi_k(P)\,|\, P\in D\}\cup\{Hv^T\}$ intersects with $\varepsilon$ at $n\geq \mathcal{N}_0$ points, by the assumption of the corollary, the syndrome $Hv^T$ has to be of the form $\phi_k(Q)$ for some $Q\in E(\f{q})\setminus D$. From the proof of Theorem~\ref{thm:deephole&syn}, $\phi_k(Q)$ is the syndrome of some word $w\in C_{\Omega}(D, kO-Q)\setminus C$. That is,
	 $$Hv^T=\phi_k(Q)=Hw^T$$ 
	 for some $w\in C_{\Omega}(D, kO-Q)\setminus C$. So we have
	 $v\equiv w \bmod C.$ That is, $v\in w+C$ for some $w\in C_{\Omega}(D, kO-Q)\setminus C$. The completeness is proved.
\end{proof}

If such an $\mathcal{N}_0\leq |E(\f{q})|-1$ in the above corollary exists, then Conjecture~\ref{conj:completeness} holds for $C=C_{\Omega}(D, kO)$ where $k\in \{n-3,n-4,n-6\}$. As an extension of~\cite[Theorem~1]{SR86} from the genus $g=0$ to the genus $g=1$, it is interesting to study the non-extendability of $\{\phi_k(P)\mid P\in E(\f{q})\}$ and the existence of $\mathcal{N}_0$ in Corollary~\ref{thm:completecor}.

\section{Conclusion}
In this paper, we studied deep holes of elliptic curve codes. If the covering radius $\rho=k-1$ which typically holds for long residue elliptic $[n,n-k]$-codes, then classes of deep holes and their syndromes are determined. The completeness of the deep holes found was discussed in connection with extendability of $(n;k)$-sets in finite geometry. 
If the deep holes found are not complete, then permutation automorphisms can be applied to obtain more deep holes.

\bibliographystyle{plain}
\bibliography{deephole}
\end{document}